\newcommand{\lyxmathsym}[1]{\ifmmode\begingroup\def\b@ld{bold}
  \text{\ifx\math@version\b@ld\bfseries\fi#1}\endgroup\else#1\fi}
\providecommand{\tabularnewline}{\\}
\newtheorem{example}{Example}
\newtheorem{definitn}{Definition}
\newtheorem{lemma}{Lemma}
\newtheorem{thm}{Theorem}
\newtheorem{remrk}{Remark}
\DeclareMathAlphabet{\mathcal}{OMS}{cmsy}{m}{n}
\newcommand{\boxastlimits}{\operatornamewithlimits{
\boxast}}
\newcommand{\oastlimits}{\operatornamewithlimits{\oast}}
\begin{document}

\title{Spatially-Coupled Codes and Threshold Saturation on Intersymbol-Interference
Channels}

\author{Phong S. Nguyen, Arvind Yedla, Henry D. Pfister, and Krishna R. Narayanan%
\thanks{This material is based upon work supported by the National Science
Foundation under Grant No. 0747470. The work of P. Nguyen was also
supported in part by a Vietnam Education Foundation fellowship. Any
opinions, findings, conclusions, or recommendations expressed in this
material are those of the authors and do not necessarily reflect the
views of the National Science Foundation.%
}\\
{\normalsize Department of Electrical and Computer Engineering,
Texas A\&M University }\\
{\normalsize College Station, TX 77840, U.S.A. }\\
{\normalsize \{psn, yarvind, hpfister, krn\}@tamu.edu \vspace{-5mm}}}
\maketitle
\begin{abstract}
Recently, it has been observed that terminated low-density-parity-check
(LDPC) convolutional codes (or spatially-coupled codes) appear to
approach capacity universally across the class of binary memoryless
channels. This is facilitated by the {}``threshold saturation''
effect whereby the belief-propagation (BP) threshold of the spatially-coupled
ensemble is boosted to the maximum a-posteriori (MAP) threshold of
the underlying constituent ensemble. 

In this paper, we consider the universality of spatially-coupled codes
over intersymbol-interference (ISI) channels under joint iterative
decoding. More specifically, we empirically show that threshold saturation
also occurs for the considered problem. This can be observed by first
identifying the EXIT curve for erasure noise and the GEXIT curve for
general noise that naturally obey the general area theorem. From these
curves, the corresponding MAP and the BP thresholds are then numerically
obtained. With the fact that regular LDPC codes can achieve the symmetric
information rate (SIR) under MAP decoding, spatially-coupled codes
with joint iterative decoding can universally approach the SIR of
ISI channels. For the dicode erasure channel, Kudekar and Kasai recently
reported very similar results based on EXIT-like curves. \end{abstract}
\begin{keywords}
Area theorem, BP threshold, EXIT curve, GEXIT curve, ISI channels,
LDPC codes, MAP threshold, spatial coupling, symmetric information
rate, threshold saturation.
\end{keywords}

\section{Introduction}

Irregular low-density parity-check (LDPC) codes can be carefully designed
to achieve the capacity of the binary erasure channel (BEC) \cite{Luby-it01}
and closely approach the capacity of general binary-input symmetric-output
memoryless (BMS) channels \cite{Chung-comlett01} under belief-propagation
(BP) decoding. LDPC convolutional codes, which were introduced in
\cite{Felstrom-it99} and shown to have excellent BP thresholds in
\cite{Sridharan-aller04,Lentmaier-isit05}, have recently been observed
to \emph{universally} approach the capacity of various channels. The
fundamental mechanism behind this is explained well in \cite{Kudekar-it11},
where it is proven analytically for the BEC that the BP threshold
of a particular spatially-coupled ensemble converges to the maximum
a-posteriori (MAP) threshold of the underlying ensemble. A similar
result was also observed independently in \cite{Lentmaier-isit10}
and stated as a conjecture. Such a phenomenon is now called {}``threshold
saturation via spatial coupling'' and has also been empirically observed
for general BMS channels \cite{Kudekar-istc10}. In fact, threshold
saturation seems to be quite general and has now been observed in
a wide range of problems, e.g., see \cite{Kudekar-aller10,Hassani-itw10,Kudekar-isit11-DEC,Kudekar-isit11-MAC,Yedla-isit11,Yedla-aller11}%
\footnote{To be precise, the papers \cite{Kudekar-aller10,Kudekar-isit11-DEC,Kudekar-isit11-MAC}
only observe the threshold saturation effect indirectly because the
considered EXIT-like curves provide no direct information about the
MAP threshold of the underlying ensemble.%
}. 

In the realm of channels with memory and particularly intersymbol
interference (ISI) channels, the capacity may not be achievable via
equiprobable signaling. For linear codes, a popular practice is to
compare instead with the symmetric information rate (SIR), which is
also known as $C_{\text{i.u.d.}}$ \cite{Kavcic-it03}, because this
the rate is achievable by random linear codes with maximum-likelihood
(ML) decoding. A numerical method for tightly estimating the SIR of
finite-state channels in general was first proposed in \cite{Arnold-icc01,Pfister-globe01}.
For LDPC codes over ISI channels, a joint iterative BP decoder that
operates on a large graph representing both the channel and the code
constraints \cite{Kurkoski-it02,Kavcic-it03} can perform quite well
and even approach the SIR \cite{Pfister-brest03,Pfister-jsac08}.
Progress has been made on the design of SIR-approaching irregular
LDPC codes for some specific ISI channels \cite{Pfister-brest03,Varnica-comlett03,Narayanan-aller04,Soriaga-it07,Pfister-jsac08}.
However, channel parameters must be known at the transmitter for such
designs and therefore universality across ISI channels appears difficult
to achieve.

Since spatially-coupled codes and the threshold saturation effect
have now shown benefits in many communication problems, it is quite
natural to consider them as a potential candidate to \emph{universally}
approach the SIR of ISI channels with low decoding complexity. In
fact, the combination of spatially-coupled codes and ISI channels
was recently considered by Kudekar and Kasai \cite{Kudekar-isit11-DEC}
for the simple dicode erasure channel (DEC) from \cite{Pfister-03,Pfister-jsac08}.
They provided a numerical evidence that the joint BP threshold of
the spatially coupled codes can approach the SIR over the DEC (by
increasing the degrees while keeping the rate fixed). Also, they outlined
a tentative proof approach for the threshold saturation following
the ideas in \cite{Kudekar-it11}. However, the EXIT-like curves they
considered were not equipped with an area theorem and therefore could
not be directly connected with the MAP threshold of the underlying
ensemble. Thus, the threshold saturation effect was indirectly observed.

In this paper, we begin by considering the transmission of the spatially-coupled
codes over the class of generalized erasure channels (GECs) of which
the DEC and BEC are two particular examples. For these channels, we
provide a rigorous analysis of the upper bound on the MAP threshold
of LDPC codes by extending the analysis in \cite{Measson-it08} beyond
the BEC case %
\footnote{The upper bound technique on the MAP threshold for the DEC was first
mentioned in an earlier paper by one of the authors \cite{Wang-turbo08}%
}. For the DEC, we then employ a counting argument and present a numerical
evidence that this bound is indeed tight for regular ensembles. With
the MAP threshold determined, the threshold saturation phenomenon
can be observed to occur exactly for the several channels from the
GECs. Next, we also consider the case of more general ISI channels
where, by deriving the appropriate GEXIT curve and associated area
theorem, the MAP threshold upper bound can be computed and threshold
saturation can be seen. As a consequence, it is possible for spatially-coupled
codes to closely approach the SIR of ISI channels under joint iterative
BP decoding because regular LDPC codes can achieve the SIR under MAP
decoding \cite{Bae-jsac09}.

\section{Background}

In this section, we briefly describe our notation for ISI channels,
LDPC ensembles, the joint iterative decoder and spatially-coupled
codes.

\subsection{ISI Channels and the SIR}

Let the input alphabet $\mathcal{X}$ be finite, $\{X_{i}\}_{i\in\mathbb{Z}}$
be the discrete-time input sequence (i.e., $X_{i}\in\mathcal{X}$)
and $\{Y_{i}\}_{i\in\mathbb{Z}}$ be the discrete-time output sequence.
Many ISI channels of interest can be modeled by\begin{equation}
Y_{i}=\sum_{t=0}^{\nu}a_{t}X_{i-t}+N_{i},\label{eq:ISI}\end{equation}
where the channel memory is $\nu$, $\{a_{t}\}_{t=1}^{\nu}$ is the
set of tap coefficients and $\{N_{i}\}_{i\in\mathbb{Z}}$ is a sequence
of independent noise random variables. One can also write the above
as $Y_{i}=Z_{i}+N_{i}$ where $Z_{i}=\sum_{t=0}^{\nu}a_{t}X_{i-t}$
is the ISI output without noise. In this paper, we restrict ourself
to the class of binary-input ISI channels. Often, the tap coefficients
are represented through a transform domain polynomial $a(D)=\sum_{t=0}^{\nu}a_{t}D^{t}$.
For example, when $a(D)=1-D$, the channel is known as the dicode
channel. 

The main subject of Section \ref{sec:DEC} is the class of generalized
erasure channels (GECs) in \cite{Pfister-03,Pfister-jsac08}. For
the GEC, one can evaluate its SIR (see \cite{Pfister-03,Pfister-jsac08}
for details) as\begin{equation}
I_{s}(\epsilon)=1-\int_{0}^{1}f(t,\epsilon)\text{d}t\label{eq:SIRGEC}\end{equation}
where $f(t,\epsilon)$ is the function which maps $t$, the \emph{a
priori} erasure rate from the code, and the channel erasure rate $\epsilon$
to the erasure rate at the output of the channel detector \cite{Pfister-jsac08}.
Strictly speaking, in this paper we mainly consider a subclass of
the GECs where the channel output sequence can be modeled as a deterministic
mapping of the input sequence plus erasure noise.
\begin{example}
The simpliest example is the dicode erasure channel (DEC), which is
basically a $1$st-order differentiator (i.e., $a(D)=1-D$) whose
output is erased with probability $\epsilon$ and transmitted perfectly
with probability $1-\epsilon$. Furthermore, if the input bits are
differentially encoded prior to transmission, the resulting channel
is called the precoded dicode erasure channel (pDEC). The simplicity
of the channel models allows one to analyze the recursions used by
the Bahl-Cocke-Jelinek-Raviv (BCJR) algorithm \cite{Bahl-it74} to
compute \begin{equation}
f_{\text{DEC}}(t,\epsilon)=\frac{4\epsilon^{2}}{(2-t(1-\epsilon))^{2}}\label{eq:fDEC}\end{equation}
for the DEC and \begin{equation}
f_{\text{pDEC}}(t,\epsilon)=\frac{4\epsilon^{2}t(1-\epsilon(1-t))}{(1-\epsilon(1-2t))^{2}}\label{eq:fpDEC}\end{equation}
for the pDEC. For both cases, explicit calculations give $I_{s}=1-\frac{2\epsilon^{2}}{1+\epsilon}$
\cite{Pfister-jsac08}. Note that this formula also applies for the
BEC where one has $f(t,\epsilon)=\epsilon$ and $I_{s}(\epsilon)=1-\epsilon$.
\end{example}
Section \ref{sec:General-ISI} considers more general ISI channels
among which the most important is probably linear ISI channels with
additive white Gaussian noise (AWGN). For this class of ISI channels,
the SIR is given by%
\footnote{A vector $(X_{1},X_{2},\ldots,X_{n})$ is denoted by $X_{1}^{n}$
for convenience.%
}\[
C_{\text{i.u.d.}}=\lim_{n\rightarrow\infty}\frac{1}{n}I(X_{1}^{n};Y_{1}^{n})\Big{|}_{p_{X_{1}^{n}}(x_{1}^{n})=2^{-n}}.\]
Unfortunately, no closed-form solutions for the SIR are known in this
case. Instead, the numerical method described in \cite{Arnold-icc01,Pfister-globe01,Arnold-it06}
is typically used to give tight estimates of the SIR.

\subsection{\label{sub:jointBP}LDPC Ensembles and the Joint BP Decoder}

The standard irregular LDPC ensemble is characterized by its degree
distribution (d.d.), which represents the fraction of nodes (or edges)
of particular degrees. From the edge perspective, the d.d. pair consists
of two polynomials $\lambda(x)=\sum_{i\geq1}\lambda_{i}x^{i-1}$ and
$\rho(x)=\sum_{i\geq1}\rho_{i}x^{i-1}$ whose coefficients $\lambda_{i}$
(or $\rho_{i}$) give the fraction of edges that connect to bit (or
check) nodes of degree $i$. The LDPC ensemble can also be viewed
from the node perspective where its d.d. pair $L(x)=\sum_{i\geq1}L_{i}x^{i}$
and $R(x)=\sum_{i\geq1}R_{i}x^{i}$ have coefficients $L_{i}$ (or
$R_{i}$) equal to the fraction of bit (or check) nodes of degree
$i$. The design rate of an LDPC ensemble is given by \[
\mathtt{r}=1-\frac{L'(1)}{R'(1)}=1-\frac{\int_{0}^{1}\rho(x)\text{d}x}{\int_{0}^{1}\lambda(x)\text{d}x}.\]

When LDPC codes are transmitted over the ISI channels defined by (\ref{eq:ISI}),
one can construct a large graph by joining the code graph and the
channel graph together as depicted in Fig. \ref{fig:JointGraph}.
Working on this joint graph, a joint iterative decoder typically passes
the information back and forth between the channel detector and the
LDPC decoder. This technique is termed as turbo equalization and was
first considered by Douillard \emph{et al. }in the context of turbo
codes \cite{Douillard-ett95}. For analysis, we also require the addition
of a random scrambling vector to symmetrize the effective channel
\cite{Hou-it03}. This is very similar to using a random coset of
the LDPC code to allow analysis of the decoder using the all-zero
codeword assumption; this technique was also used in \cite{Kavcic-it03}
where they proved a concentration theorem and derived the density
evolution (DE) equations for ISI channels.%
\begin{figure}
\begin{centering}
\includegraphics[scale=0.6]{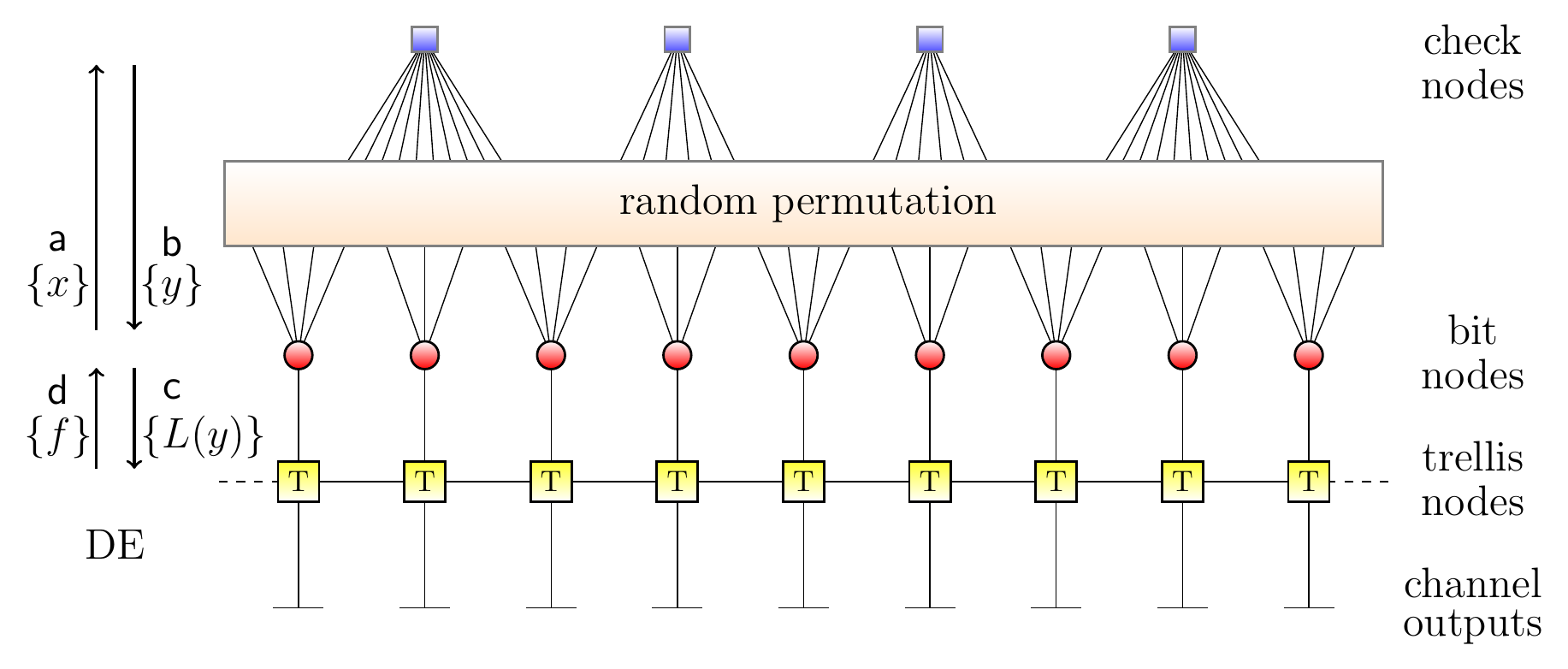}
\par\end{centering}

\caption{\label{fig:JointGraph}Gallager-Tanner-Wiberg graph of the joint BP
decoder for ISI channels. The notations $\mathsf{a},\mathsf{b},\mathsf{c},\mathsf{d}$
denote the average densities of the messages traversing along the
graph used in density evolution (DE). The quantities inside the brackets
are erasure rates used in DE for the GEC case. The update schedule
of the joint BP decoder is also implied by the arrows in this figure.}

\end{figure}

\subsection{Spatially-Coupled Ensembles}

The class of spatially-coupled ensembles in general can be defined
quite broadly. In this paper, we mainly consider two basic variants
(see details in \cite{Kudekar-it11}) as discussed below.

\subsubsection{The $(l,r,L)$ ensemble}

The $(l,r,L)$ spatially-coupled ensemble (with $l$ odd so that $\hat{l}=\frac{l-1}{2}\in\mathbb{N}$)
can be constructed from the underlying $(l,r)$-regular LDPC ensemble.
At each position from $[1,L]$ one has $M$ bit nodes and $\frac{l}{r}M$
check nodes just like in the $(l,r)$-regular case. However, each
bit node at position $i$ is connected to check nodes at the same
position, at $\hat{l}$ positions to the left and $\hat{l}$ positions
to the right (one check node from each position). In doing this, one
also needs to add $\frac{l}{r}M$ extra check nodes at each of $\hat{l}$
extra positions on each side. For example, a joint code/channel graph
for the $(3,6,L)$ ensemble and the ISI channels is shown in Fig.
\ref{fig:JGraph(3,6,L)}. The design rate of the $(l,r,L)$ ensemble
is given by\[
\mathtt{r}(l,r,L)=\left(1-\frac{l}{r}\right)-\frac{l}{r}\cdot\frac{l-1}{L}.\]

\begin{figure}
\begin{centering}
\includegraphics[scale=0.52]{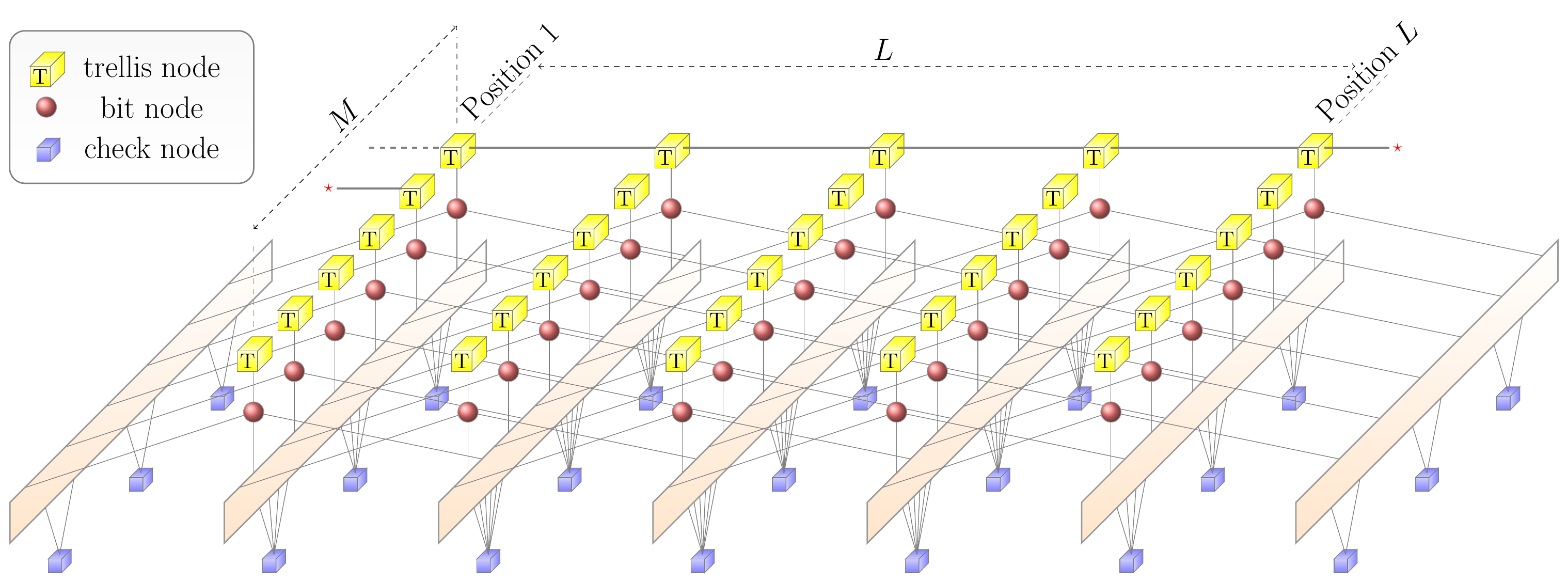}
\par\end{centering}

\caption{\label{fig:JGraph(3,6,L)}The joint graph for the $(l,r,L)$ ensemble
over the ISI channels. Illustrated in this figure is the case when
$l=3$ and $r=6$.}

\end{figure}

\subsubsection{The $(l,r,L,w)$ ensemble}

The $(l,r,L,w)$ can be obtained with the introduction of a {}``smoothing''
parameter $w$. One still places $M$ variable nodes at each position
in $[1,L]$ but places $\frac{l}{r}M$ check nodes at each position
in $[1,L+w\lyxmathsym{\textminus}1]$. Each bit node at position $i$
is connected uniformly and independently to a total of $l$ check
nodes at positions from the range $[i,i+w-1]$. By adding this randomization
of the edge connections with the parameter $w$, for large enough
$w$ the system behaves like a continuous one and a proof of the threshold
saturation effect becomes feasible \cite{Kudekar-it11}. The design
rate of the $(l,r,L,w)$ ensemble is given by\[
\mathtt{r}(l,r,L,w)=\left(1-\frac{l}{r}\right)-\frac{l}{r}\cdot\frac{w+1-2\sum_{i=0}^{w}\left(\frac{i}{w}\right)^{r}}{L}.\]

\section{\label{sec:DEC}ISI Channels with Erasure Noise: The GECs}

In this section, we focus on the class of GECs. We will present some
closed-form analyses on the (E)BP EXIT curves of the joint BP decoder.
This allows us to obtain an estimate of the MAP threshold of the underlying
ensemble. Then, DE is used to computed the BP thresholds of the corresponding
spatially-coupled ensembles and the threshold saturation effect is
demonstrated.

\subsection{BP and EBP Curves for the GEC}

For the class of GECs, the DE update equation of the joint BP decoder
is given by\[
x^{(\ell+1)}=f(L(1-\rho(1-x^{(\ell)}),\epsilon)\lambda(1-\rho(1-x^{(\ell)}))\]
where $x^{(\ell)}$ is the average erasure rate emitted from bit nodes
to check nodes during the $\ell$th iteration \cite{Pfister-jsac08}.

Let $x$ denote the limit of $x^{(\ell)}$ when $\ell\rightarrow\infty$.
The fixed point (FP) equation is then given by \begin{equation}
x=f(L(y(x)),\epsilon)\lambda(y(x))\label{eq:DEforGEC}\end{equation}
 where, for simplicity of notation, we use $y(x)\triangleq1-\rho(1-x)$
(and sometimes $y$ for short).

For most of the GECs, $f(t,\epsilon)$ is strictly increasing in $\epsilon$
for fixed $t$. In this case, there exists a unique function $\xi(t,v)$
such that $f(t,\xi(t,v))=v$ and one can obtain\begin{equation}
\epsilon(x)=\xi\left(L(y(x)),\frac{x}{\lambda(y(x))}\right).\label{eq:exeq}\end{equation}

\begin{example}
For the DEC case, one has $f(t,\epsilon)=\frac{4\epsilon^{2}}{(2-t(1-\epsilon))^{2}}$
and this gives the FP equation $x=\frac{4\epsilon^{2}\lambda(y)}{(2-L(y)(1-\epsilon))^{2}}.$
One can also solve for $\xi(t,v)=(2-t)/\left(\frac{2}{\sqrt{v}}-t\right)$
and gets \begin{equation}
\epsilon(x)=\frac{2-L(y(x))}{2\sqrt{\frac{\lambda(y(x))}{x}}-L(y(x))}.\label{eq:e(x)}\end{equation}
\end{example}
\begin{definitn}
\label{def:EXITf}Consider a d.d. $(\lambda,\rho)$ pair and the sequence
of LDPC ensembles LDPC($n,\lambda,\rho)$. For each $\mathcal{C}$
picked uniformly at random from $\text{LDPC}(n,\lambda,\rho)$, let
$X_{1}^{n}$ be chosen randomly and uniformly at random from $\mathcal{C}$
and and $Y_{1}^{n}$ be the received sequence after transmission over
a GEC with erasure rate $\epsilon$ and initial state $S_{0}$. The
associated EXIT function is defined as

\[
h(\epsilon)\triangleq\lim_{n\rightarrow\infty}\mathbb{E}_{\mathcal{C}}\left[\frac{\partial H(X_{1}^{n}|Y_{1}^{n}(\epsilon),S_{0})}{\partial\epsilon}\right].\]

\end{definitn}
When BP estimator is used at each bit instead of the optimal MAP estimator,
one also has the BP-EXIT function which is given by the following
definition.
\begin{definitn}
Consider the same setting as in Definition \ref{def:EXITf}, the associated
(joint) BP-EXIT function is defined to be\[
h^{\text{BP}}(\epsilon)\triangleq\lim_{\ell\rightarrow\infty}h^{\text{BP},\ell}(\epsilon)\]
where\[
h^{\text{BP},\ell}(\epsilon)=\lim_{n\rightarrow\infty}\mathbb{E}_{\mathcal{C}}\left[\frac{1}{n}\sum_{i=1}^{n}\frac{\partial H(X_{1}^{n}|Y_{i}(\epsilon),\mathcal{E}_{i}^{\text{BP,}\ell},S_{0})}{\partial\epsilon}\right]\]
and $\mathcal{E}_{i}^{\text{BP,}\ell}$ is the extrinsic BP estimate
of the $i$th bit after iteration $\ell$.\end{definitn}
\begin{lemma}
\label{lem:hvshbp}For simplicity of notation, let us write $Y_{\sim i}$
to denote the sequence $Y_{1}^{n}\setminus Y_{i}$. Then, the EXIT
function and BP-EXIT function (after iteration $\ell$) can be written
as\begin{align}
h(\epsilon) & =\lim_{n\rightarrow\infty}\mathbb{E}_{\mathcal{C}}\left[\frac{1}{n}\sum_{i=1}^{n}H(Z_{i}|Y_{\sim i}(\epsilon),S_{0})\right],\label{eq:hezi}\\
h^{\text{BP},\ell}(\epsilon) & =\lim_{n\rightarrow\infty}\mathbb{E}_{\mathcal{C}}\left[\frac{1}{n}\sum_{i=1}^{n}H(Z_{i}|\mathcal{E}_{i}^{\text{BP,}\ell}(Y_{\sim i}(\epsilon)),S_{0})\right].\label{eq:hebpzi}\end{align}
where $Z_{i}$ is the $i$th output without noise. From this, one
can see that $h(\epsilon)\leq h^{\text{BP}}(\epsilon)$.\end{lemma}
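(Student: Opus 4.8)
The plan is to reduce both identities to a single-position differentiation and then to read off the inequality from the \emph{conditioning-reduces-entropy} principle. Throughout I use the defining feature of the GEC: at each position the detector output $Y_i$ equals the noiseless ISI output $Z_i$ with probability $1-\epsilon$ and is an erasure with probability $\epsilon$, independently across positions, so that observing $Y_1^n$ is equivalent to learning the erasure pattern together with the values $\{Z_j\}$ at the unerased positions.

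For the EXIT identity (\ref{eq:hezi}), I would first replace the common erasure rate by individual rates $\epsilon_1,\dots,\epsilon_n$ and condition on whether position $i$ is erased. Since an erasure at $i$ leaves $Y_{\sim i}$ as the only observation while a non-erasure additionally reveals $Z_i$,
\[
H(X_1^n\mid Y_1^n,S_0)=\epsilon_i\,H(X_1^n\mid Y_{\sim i},S_0)+(1-\epsilon_i)\,H(X_1^n\mid Z_i,Y_{\sim i},S_0),
\]
so that $\frac{\partial}{\partial\epsilon_i}H(X_1^n\mid Y_1^n,S_0)=H(X_1^n\mid Y_{\sim i},S_0)-H(X_1^n\mid Z_i,Y_{\sim i},S_0)$. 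Because $Z_i$ is a deterministic function of $(X_1^n,S_0)$, the chain rule collapses this difference to $H(Z_i\mid Y_{\sim i},S_0)$. Setting every $\epsilon_i=\epsilon$, summing the per-position contributions by the chain rule, and applying the per-letter normalization built into the EXIT function then gives (\ref{eq:hezi}). This step is exactly where $Z_i$, rather than the input bit $X_i$, emerges as the natural ``symbol'' of the ISI EXIT function.

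The BP-EXIT identity (\ref{eq:hebpzi}) is obtained from the same computation with the extrinsic estimate $\mathcal{E}_i^{\text{BP},\ell}$ playing the role of the side information $Y_{\sim i}$. The structural fact I would rely on is that $\mathcal{E}_i^{\text{BP},\ell}$ is a deterministic function of $Y_{\sim i}$ (together with $S_0$ and the realized graph) and is independent of $Y_i$; hence differentiating $H(X_1^n\mid Y_i(\epsilon),\mathcal{E}_i^{\text{BP},\ell},S_0)$ through the position-$i$ channel, and again using that $Z_i$ is determined by $(X_1^n,S_0)$, yields $H(Z_i\mid\mathcal{E}_i^{\text{BP},\ell},S_0)$, as claimed.

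Finally, the inequality $h(\epsilon)\le h^{\text{BP}}(\epsilon)$ follows termwise: writing $\mathcal{E}_i^{\text{BP},\ell}=g(Y_{\sim i})$ for the deterministic BP message map $g$, conditioning on the full $Y_{\sim i}$ is at least as informative about $Z_i$ as conditioning on $\mathcal{E}_i^{\text{BP},\ell}$, i.e. $H(Z_i\mid Y_{\sim i},S_0)\le H(Z_i\mid\mathcal{E}_i^{\text{BP},\ell},S_0)$. Averaging over $i$ and over $\mathcal{C}$, taking $n\to\infty$, and then $\ell\to\infty$ gives $h(\epsilon)\le h^{\text{BP},\ell}(\epsilon)\to h^{\text{BP}}(\epsilon)$. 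I expect the main obstacle to be the bookkeeping of the BP step rather than the inequality itself: one must pin down the precise meaning of the derivative in the BP-EXIT definition (that it perturbs only the position-$i$ channel while the extrinsic message is held as side information) and verify in the joint code/channel graph that $\mathcal{E}_i^{\text{BP},\ell}$ really is extrinsic, i.e. a function of $Y_{\sim i}$ alone; the interchange of $\frac{\partial}{\partial\epsilon}$ with $\lim_{n\rightarrow\infty}$ and $\mathbb{E}_{\mathcal{C}}$ is a further technical point.
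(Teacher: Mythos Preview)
Your proposal is correct and follows essentially the same route as the paper's proof. The paper unifies the two cases by computing $\partial H(X_1^n\mid Y_i(\epsilon_i),\mathcal{E}(Y_{\sim i}),S_0)/\partial\epsilon_i$ for a \emph{generic} extrinsic estimator $\mathcal{E}$ (first replacing $X_1^n$ by $Z_1^n$ via the bijection given $S_0$, then peeling off $Z_i$) and specializes $\mathcal{E}$ to the identity and to the BP map; your computation is the same manipulation phrased as a mutual-information difference on $X_1^n$, and the final inequality is exactly the data-processing step the paper invokes.
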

\begin{proof}
Let $\epsilon_{i}$ be the erasure rate of the channel from $Z_{i}$
to $Y_{i}$. For any extrinsic estimator $\mathcal{E}$, one has \begin{align*}
H(X_{1}^{n}|Y_{i}(\epsilon_{i}),\mathcal{E}(Y_{\sim i}),S_{0}) & =H(Z_{1}^{n}|Y_{i}(\epsilon_{i}),\mathcal{E}(Y_{\sim i}),S_{0})\\
 & =H(Z_{i}|Y_{i}(\epsilon_{i}),\mathcal{E}(Y_{\sim i}),S_{0})+H(Z_{\sim i}|Y_{i}(\epsilon_{i}),\mathcal{E}(Y_{\sim i}),Z_{i},S_{0})\\
 & =\epsilon_{i}H(Z_{i}|\mathcal{E}(Y_{\sim i}),S_{0})+H(Z_{\sim i}|Y_{i}(\epsilon_{i}),\mathcal{E}(Y_{\sim i}),Z_{i},S_{0}).\end{align*}

Since the second term on the R.H.S. does not depend on $\epsilon_{i}$,
it is clear that

\[
\frac{\partial H(X_{1}^{n}|Y_{i}(\epsilon_{i}),\mathcal{E}(Y_{\sim i}),S_{0})}{\partial\epsilon_{i}}=H(Z_{i}|\mathcal{E}(Y_{\sim i}),S_{0}).\]

By letting $\epsilon_{i}=\epsilon$ for all $i$ and considering two
specific cases of $\mathcal{E}$, one obtains (\ref{eq:hezi}) and
(\ref{eq:hebpzi}).

Furthermore, by data processing inequality \cite{Cover-1991}, one
has $H(Z_{i}|Y_{\sim i}(\epsilon),S_{0})\leq H(Z_{i}|\mathcal{E}_{i}^{\text{BP,}\ell}(Y_{\sim i}(\epsilon)),S_{0})$
which implies $h(\epsilon)\leq h^{\text{BP},\ell}(\epsilon)$ hence
$h(\epsilon)\leq h^{\text{BP}}(\epsilon)$.
\end{proof}
While computing the (MAP) EXIT function in general is hard, it is
relatively easy to compute the BP-EXIT function.
\begin{lemma}
The BP-EXIT function for the GEC is given by\begin{equation}
h^{\text{BP}}(\epsilon)=\frac{\partial}{\partial\tilde{\epsilon}}\int_{0}^{L(y)}f(t,\tilde{\epsilon})\text{d}t\Big|_{\tilde{\epsilon}=\epsilon}.\label{eq:hbpGEC}\end{equation}
where $L(y)$ is the extrinsic erasure rate given by the FP equation
at channel erasure rate $\epsilon$.\end{lemma}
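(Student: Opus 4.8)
The plan is to take the iteration and blocklength limits in the formula for $h^{\text{BP},\ell}$ supplied by Lemma \ref{lem:hvshbp}, to identify the erasure statistics of the message entering the channel detector through density evolution, and then to recognize the resulting extrinsic entropy as the $\epsilon$-derivative of an area integral under the channel-detector transfer function $f$.

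First I would start from (\ref{eq:hebpzi}) and pass to the limits $\ell\to\infty$ and then $n\to\infty$, so that $h^{\text{BP}}(\epsilon)=\lim_{n\to\infty}\mathbb{E}_{\mathcal{C}}[\frac{1}{n}\sum_{i=1}^{n}H(Z_{i}\mid\mathcal{E}_{i}^{\text{BP}}(Y_{\sim i}(\epsilon)),S_{0})]$. By the concentration and density-evolution results of \cite{Kavcic-it03}, in this limit the aggregate message that the code passes into the channel detector on each variable is, with high probability, an erasure with probability $L(y)$, where $y=1-\rho(1-x)$ and $x$ is the fixed point of (\ref{eq:DEforGEC}) at the channel parameter $\epsilon$; deleting the single observation $Y_{i}$ does not move this fixed point as $n\to\infty$, and the influence of $S_{0}$ is a vanishing boundary effect. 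Because the channel-detector subgraph is the ISI trellis, which is cycle-free, belief propagation on it computes exact marginals. Hence $\mathcal{E}_{i}^{\text{BP}}(Y_{\sim i})$ coincides with the MAP extrinsic estimate of $Z_{i}$ formed from the input priors $\text{BEC}(L(y))$ together with all channel observations except $Y_{i}$, and $H(Z_{i}\mid\mathcal{E}_{i}^{\text{BP}}(Y_{\sim i}),S_{0})$ is precisely the probability that $Z_{i}$ remains erased under these observations.

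Next I would introduce the two-parameter conditional entropy
\[
\Phi(t,\epsilon)\triangleq\lim_{n\to\infty}\frac{1}{n}H(Z_{1}^{n}\mid V_{1}^{n}(t),Y_{1}^{n}(\epsilon),S_{0}),
\]
where $V_{i}(t)$ denotes an a priori observation of $X_{i}$ through an erasure channel of rate $t$ (modeling the code's extrinsic message). Applying the general area theorem in the $\epsilon$-direction, treating each $Z_{i}$ as observed through $\text{BEC}(\epsilon)$, gives $\partial\Phi/\partial\epsilon=\lim_{n\to\infty}\frac{1}{n}\sum_{i=1}^{n}H(Z_{i}\mid V_{1}^{n}(t),Y_{\sim i}(\epsilon),S_{0})$, which at $t=L(y)$ is exactly the extrinsic entropy computed in the previous step; thus $h^{\text{BP}}(\epsilon)=\partial_{\epsilon}\Phi(L(y),\epsilon)$. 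Applying the same area theorem in the $t$-direction and using the equivalence $H(Z_{1}^{n}\mid\cdot)=H(X_{1}^{n}\mid\cdot)$ (the deterministic, almost-invertible GEC map $X_{1}^{n}\mapsto Z_{1}^{n}$ from the proof of Lemma \ref{lem:hvshbp}) gives $\partial\Phi/\partial t=\lim_{n\to\infty}\frac{1}{n}\sum_{i=1}^{n}H(X_{i}\mid V_{\sim i}(t),Y_{1}^{n}(\epsilon),S_{0})$, and, since the cycle-free trellis again yields exact extrinsic marginals, each such term equals the detector's extrinsic erasure rate $f(t,\epsilon)$ by definition. Together with the boundary value $\Phi(0,\epsilon)=0$ (when all inputs are known all $Z_{i}$ are determined, up to the vanishing initial-state ambiguity), integration yields $\Phi(t,\epsilon)=\int_{0}^{t}f(\tau,\epsilon)\,\mathrm{d}\tau$. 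Substituting $t=L(y)$ and differentiating in $\epsilon$, with the fixed-point value $L(y)$ held fixed because it is determined at the actual parameter $\epsilon$, produces (\ref{eq:hbpGEC}).

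I expect the main obstacle to be the rigorous justification that the extrinsic erasure probability concentrates on the density-evolution prediction and that the two directional applications of the area theorem are mutually consistent. In particular one must verify that replacing $X_{1}^{n}$ by $Z_{1}^{n}$ inside $\Phi$ is legitimate, which rests on $\frac{1}{n}H(X_{1}^{n}\mid Z_{1}^{n},\cdot)\to0$ for the GEC, and that the boundary term $\Phi(0,\epsilon)$ genuinely vanishes in the limit. Once these points are secured, the remaining manipulations are routine differentiation.
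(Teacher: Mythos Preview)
Your proposal is correct and follows essentially the same approach as the paper: both introduce the two-parameter conditional entropy $\Phi(t,\epsilon)=\lim_{n}\frac{1}{n}H(X_{1}^{n}\mid V_{1}^{n}(t),Y_{1}^{n}(\epsilon),S_{0})$ (the paper writes it as $\frac{1}{n}H(X_{1}^{n}\mid Y_{1}^{n}(\tilde{\epsilon}),\mathcal{E}_{1}^{n}(\delta))$), apply the area theorem in the $t$-direction to identify $\Phi$ with $\int_{0}^{t}f(\tau,\epsilon)\,\mathrm{d}\tau$, and then take the $\epsilon$-derivative at $t=L(y)$ to recover $h^{\text{BP}}(\epsilon)$. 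The paper carries out the area-theorem step at finite $n$ via $f_{n}$ and only then passes to the limit (citing \cite{Pfister-jsac08} for $f_{n}\to f$), whereas you take the limit first and invoke density-evolution concentration directly, but the underlying argument is the same.
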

\begin{proof}
Let $Y_{1}^{n}(\tilde{\epsilon})$ be the result of passing $X_{1}^{n}$
through the communication channel, e.g., the GEC, with erasure rate
$\tilde{\epsilon}$ and, with some abuse of notation, $\mathcal{E}_{1}^{n}(p)$
be the result of passing $X_{1}^{n}$ through the extrinsic channel
which is modeled as BEC with erasure probability $p.$ Similarly to
\cite{Pfister-jsac08}, let $T_{n}(1-t,\tilde{\epsilon})\triangleq\frac{1}{n}\sum_{i=1}^{n}I(X_{i};Y_{1}^{n}(\tilde{\epsilon}),\mathcal{E}_{\sim i}(p))$
denote the mutual information transfer function where $\mathcal{E}_{\sim i}$
comprises the sequence of extrinsic bit estimates except for the $i$th
bit. We also let $f_{n}(t,\tilde{\epsilon})\triangleq1-T_{n}(1-t,\tilde{\epsilon})$.
By the area theorem \cite[Th. 2]{Ashikhmin-it04,Measson-it08}, one
obtains\begin{equation}
\int_{0}^{\delta}\frac{1}{n}\sum_{i=1}^{n}H(X_{i}|Y_{1}^{n}(\epsilon),\mathcal{E}_{\sim i}(t))\text{d}t=\frac{1}{n}H(X_{1}^{n}|Y_{1}^{n}(\epsilon),\mathcal{E}_{1}^{n}(\delta)).\label{eq:area1}\end{equation}
We then have

\begin{align}
\frac{\partial}{\partial\tilde{\epsilon}}\int_{0}^{\delta}f_{n}(t,\tilde{\epsilon})\text{d}t & =-\frac{\partial}{\partial\tilde{\epsilon}}\int_{0}^{\delta}T_{n}(1-t,\tilde{\epsilon})\text{d}t\nonumber \\
 & =\frac{\partial}{\partial\tilde{\epsilon}}\int_{0}^{\delta}\left(-\frac{1}{n}\sum_{i=1}^{n}I(X_{i};Y_{1}^{n}(\tilde{\epsilon}),\mathcal{E}_{\sim i}(t))\right)\text{d}t\nonumber \\
 & =\frac{\partial}{\partial\tilde{\epsilon}}\int_{0}^{\delta}\left(\frac{1}{n}\sum_{i=1}^{n}H(X_{i})-\frac{1}{n}\sum_{i=1}^{n}I(X_{i};Y_{1}^{n}(\tilde{\epsilon}),\mathcal{E}_{\sim i}(t))\right)\text{d}t\label{eq:addconst}\\
 & =\frac{\partial}{\partial\tilde{\epsilon}}\int_{0}^{\delta}\frac{1}{n}\sum_{i=1}^{n}H(X_{i}|Y_{1}^{n}(\tilde{\epsilon}),\mathcal{E}_{\sim i}(t))\text{d}t\nonumber \\
 & =\frac{\partial}{\partial\tilde{\epsilon}}\left[\frac{1}{n}\sum_{i=1}^{n}H(X_{1}^{n}|Y_{1}^{n}(\tilde{\epsilon}),\mathcal{E}_{1}^{n}(\delta))\right]\label{eq:area}\end{align}
where (\ref{eq:addconst}) holds because $\frac{\delta}{n}\sum_{i=1}^{n}H(X_{i})$
is not a function of $\tilde{\epsilon}$ while (\ref{eq:area}) follows
from (\ref{eq:area1}).

If one considers the BP estimator, for each fixed $\ell$, by letting
$n\rightarrow\infty$, $f_{n}(t,\tilde{\epsilon})$ converges pointwise
to $f(t,\tilde{\epsilon})$ (see \cite{Pfister-jsac08}) while the
expectation of the R.H.S. of (\ref{eq:area}) converges to $h^{\text{BP},\ell}(\epsilon)$
if we choose $\epsilon=\tilde{\epsilon}$. Then, by letting $\ell\rightarrow\infty$,
one reaches a FP where $\delta\rightarrow L(y)$ and finally obtains\[
\frac{\partial}{\partial\tilde{\epsilon}}\int_{0}^{L(y)}f_{n}(t,\tilde{\epsilon})\text{d}t\Big|_{\tilde{\epsilon}=\epsilon}=h^{\text{BP}}(\epsilon).\]
 \end{proof}
\begin{example}
For the DEC and pDEC, using the result of (\ref{eq:fDEC}) and (\ref{eq:fpDEC}),
one has the following BP-EXIT functions\begin{align}
h_{\text{DEC}}^{\text{BP}}(\epsilon) & =\frac{\partial}{\partial\tilde{\epsilon}}\int_{0}^{L(y)}\frac{4\tilde{\epsilon}^{2}}{(2-t(1-\tilde{\epsilon}))^{2}}\text{d}t\Big|_{\tilde{\epsilon}=\epsilon}\nonumber \\
 & =\frac{2\epsilon L(y)(4-L(y)(2-\epsilon))}{(2-L(y)(1-\epsilon))^{2}}\label{eq:h(e(x))_}\end{align}
and

\begin{align*}
h_{\text{pDEC}}^{\text{EBP}}(\epsilon) & =\frac{\partial}{\partial\tilde{\epsilon}}\int_{0}^{L(y)}\frac{4\tilde{\epsilon}^{2}t(1-\tilde{\epsilon}(1-t))}{(1-\tilde{\epsilon}(1-2t))^{2}}\text{d}t\Big|_{\tilde{\epsilon}=\epsilon}\\
 & =\frac{2\epsilon L^{2}(y)(2-\epsilon(1-2L(y)))}{(1-\epsilon(1-2L(y)))^{2}}.\end{align*}
where $x$ is the DE FP at channel erasure rate $\epsilon$ and $y=y(x)$.
The formula (\ref{eq:h(e(x))_}) for the DEC case is equivalent to
the result shown in \cite{Wang-turbo08} by analyzing the BCJR algorithm.

Also, one can apply (\ref{eq:hbpGEC}) for the BEC to obtain a known
result $h_{\text{BEC}}^{\text{BP}}(\epsilon)=\frac{\partial}{\partial\tilde{\epsilon}}\int_{0}^{L(y)}\tilde{\epsilon}\text{d}t\Big|_{\tilde{\epsilon}=\epsilon}=L(y)$.
\end{example}
Using an approach similar to \cite[Sec. III-B]{Measson-it08} and
taking care of (\ref{eq:exeq}) and (\ref{eq:hbpGEC}), one gets the
following parametric form for the BP-EXIT function. This involves
in defining\[
\mathcal{I}\triangleq\bigcup_{i\in[J]}[\underline{x}^{i},\overline{x}^{i})\cup\{1\}\]
as the unique finite union of disjoint intervals that represent all
stable and achievable FPs of DE equations. Please note that $J$ represents
the number of discontinuties in the BP-EXIT function. For the case
$J\geq1$, let $x^{\text{BP}}=\underbar{x}^{1}$ and $\epsilon^{\text{BP}}=\epsilon(x^{\text{BP}})$
is the joint BP decoding threshold \cite[Sec. III-B]{Measson-it08}. 
\begin{lemma}
Given a d.d. pair $(\lambda,\rho)$, the BP-EXIT function for the
GEC is given parametrically as follows
\end{lemma}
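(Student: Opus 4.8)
The plan is to take the bit-to-check erasure rate $x$ as the curve parameter and to show that, for each fixed point in the admissible set $\mathcal{I}$, the pair consisting of the channel erasure rate $\epsilon(x)$ and the corresponding BP-EXIT value traces out the stated curve. The two coordinates come from facts already in hand: the inverted fixed-point equation (\ref{eq:exeq}), which recovers $\epsilon$ from $x$, and the BP-EXIT formula (\ref{eq:hbpGEC}), which delivers $h^{\text{BP}}$ once $L(y)$ and $\epsilon$ are known at the fixed point.

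First I would revisit the fixed-point equation (\ref{eq:DEforGEC}). Because $f(t,\epsilon)$ is strictly increasing in $\epsilon$ for fixed $t$, the inverse $\xi$ with $f(t,\xi(t,v))=v$ is well defined, and solving (\ref{eq:DEforGEC}) for $\epsilon$ yields exactly (\ref{eq:exeq}). This gives the first coordinate $\epsilon(x)=\xi\!\left(L(y(x)),\,x/\lambda(y(x))\right)$ as an explicit function of the parameter $x$ (for the DEC it specializes to (\ref{eq:e(x)})). Next I would substitute the fixed-point data into the BP-EXIT formula: at a fixed point parametrized by $x$, the extrinsic erasure rate entering (\ref{eq:hbpGEC}) is $L(y(x))$ and the channel parameter is $\epsilon(x)$, so evaluating $\frac{\partial}{\partial\tilde{\epsilon}}\int_{0}^{L(y(x))}f(t,\tilde{\epsilon})\text{d}t$ at $\tilde{\epsilon}=\epsilon(x)$ produces the second coordinate (for the DEC this is the closed form (\ref{eq:h(e(x))_})). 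Together these constitute the claimed parametric representation.

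The main obstacle is not the algebra but the correct identification of $\mathcal{I}$ and the justification that the curve so obtained is exactly the BP-EXIT function, including its jumps. Following the argument of \cite[Sec. III-B]{Measson-it08}, I would argue that only \emph{stable} fixed points (those at which the density-evolution map is locally contracting) that are also \emph{achievable} by the forward recursion initialized at $x=1$ contribute to $h^{\text{BP}}$, and that these form the finite union of half-open intervals $\mathcal{I}$ together with $\{1\}$. The $J$ discontinuities correspond to the endpoints at which the forward recursion jumps between stable branches, with the smallest left endpoint $\underline{x}^{1}$ giving the BP threshold $\epsilon^{\text{BP}}=\epsilon(\underline{x}^{1})$. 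Extending this characterization from the BEC to a general GEC requires only the monotonicity of $f$ in $\epsilon$ together with the standard monotonicity and continuity properties of $L$, $\lambda$, and $\rho$, so that $\epsilon(x)$ is piecewise monotone and the stable branches coincide precisely with the portions traced out as $x$ sweeps $\mathcal{I}$.
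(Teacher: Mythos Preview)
Your proposal is correct and follows essentially the same approach as the paper: the paper does not give a formal proof for this lemma but simply states that the parametric form follows ``using an approach similar to \cite[Sec.~III-B]{Measson-it08} and taking care of (\ref{eq:exeq}) and (\ref{eq:hbpGEC}),'' which is precisely what you do---invert the fixed-point equation to get $\epsilon(x)$, plug into the BP-EXIT formula (\ref{eq:hbpGEC}), and invoke the M\'easson--Montanari--Urbanke characterization of the stable/achievable set $\mathcal{I}$. Your write-up is in fact more explicit than the paper's own treatment.
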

\[
\left(\epsilon,h^{\text{BP}}(\epsilon)\right)=\begin{cases}
(\epsilon,0), & \epsilon\in[0,\epsilon^{\text{BP}})\\
\left(\epsilon(x),\frac{\partial}{\partial\tilde{\epsilon}}\int_{0}^{L(y(x))}f(t,\tilde{\epsilon})\text{d}t\Big|_{\tilde{\epsilon}=\epsilon(x)}\right)\,\,\forall x\in\mathcal{I}, & \epsilon\in(\epsilon^{\text{BP}},1]\end{cases}\]
where $\epsilon(x)$ is given in (\ref{eq:exeq}).

In \cite{Measson-it08}, the extended BP (EBP) EXIT curve for the
BEC was introduced as the hidden bridge between the BP threshold and
its MAP counterpart. In a similar manner, the EBP-EXIT curve for the
GEC is given below with its own area theorem.
\begin{definitn}
For a given d.d. pair $(\lambda,\rho)$, the EBP-EXIT curve for the
GEC is defined by the pair\[
\left(\epsilon(x),\frac{\partial}{\partial\tilde{\epsilon}}\int_{0}^{L(y(x))}f(t,\tilde{\epsilon})\text{d}t\Big|_{\tilde{\epsilon}=\epsilon(x)}\right),\, x\in[0,1]\]
where $\epsilon(x)$ is given in (\ref{eq:exeq}).\end{definitn}
\begin{example}
For the DEC case, using (\ref{eq:e(x)}) and (\ref{eq:h(e(x))_} ),
the EBP-EXIT curve is given by\[
\left(\frac{2-L(y(x))}{2\sqrt{\frac{\lambda(y(x))}{x}}-L(y(x))},L(y(x))\left(2\sqrt{\frac{x}{\lambda(y(x))}}-\frac{xL(y(x))}{2\lambda(y(x))}\right)\right),\, x\in[0,1].\]
\end{example}
\begin{lemma}
\label{lem:Trial Entropy}Consider the GEC and a d.d. pair $(\lambda,\rho)$.
Define the {}``trial entropy'' as \[
P(x)\triangleq\int_{0}^{x}h^{\text{EBP}}(t)\epsilon'(t)\text{d}t\]
 where $h^{\text{EBP}}(x)$ is the second coordinate the EBP-EXIT
curve. Then, we have\begin{align}
P(x) & =\int_{0}^{L(y)}f(t,\epsilon(x))\text{d}t-\frac{L'(1)}{R'(1)}(1-R(1-x)-xR'(1-x)).\label{eq:pxGEC}\end{align}

\end{lemma}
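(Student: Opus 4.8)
The plan is to prove (\ref{eq:pxGEC}) by showing that both sides agree at $x=0$ and have equal derivatives on $[0,1]$, so they coincide as functions of $x$. Write the claimed right-hand side as $Q(x)\triangleq\int_{0}^{L(y)}f(t,\epsilon(x))\,\text{d}t-\frac{L'(1)}{R'(1)}\bigl(1-R(1-x)-xR'(1-x)\bigr)$, with $y=y(x)=1-\rho(1-x)$. Since $P$ is an integral starting at $0$ we have $P(0)=0$, so I would first verify the boundary value $Q(0)=0$. Using the normalizations $\rho(1)=R(1)=1$, at $x=0$ one gets $y(0)=1-\rho(1)=0$, hence $L(y(0))=0$ and the integral term vanishes, while $1-R(1)-0\cdot R'(1)=0$ kills the second term; thus $Q(0)=0=P(0)$.

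The core of the argument is the derivative comparison. By the fundamental theorem of calculus, $P'(x)=h^{\text{EBP}}(x)\,\epsilon'(x)$. For $Q'(x)$ I would differentiate $\int_{0}^{L(y(x))}f(t,\epsilon(x))\,\text{d}t$ via the Leibniz rule, noting that both the upper limit $L(y(x))$ and the second argument $\epsilon(x)$ depend on $x$. This yields two pieces: an endpoint term $f(L(y),\epsilon(x))\,\frac{\text{d}}{\text{d}x}L(y(x))$ and a parameter term $\bigl(\int_{0}^{L(y)}\frac{\partial f(t,\tilde{\epsilon})}{\partial\tilde{\epsilon}}\big|_{\tilde{\epsilon}=\epsilon(x)}\,\text{d}t\bigr)\epsilon'(x)$. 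Recognizing the bracketed factor as precisely the second EBP-EXIT coordinate $h^{\text{EBP}}(x)$ (the upper limit being independent of $\tilde{\epsilon}$, the derivative passes inside the integral), the parameter term equals $h^{\text{EBP}}(x)\,\epsilon'(x)=P'(x)$ and cancels against it. Hence $P'(x)=Q'(x)$ reduces to the single identity $f(L(y),\epsilon(x))\,\frac{\text{d}}{\text{d}x}L(y(x))=\frac{L'(1)}{R'(1)}\frac{\text{d}}{\text{d}x}\bigl(1-R(1-x)-xR'(1-x)\bigr)$.

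The remaining step, which I expect to be the main (though purely algebraic) obstacle, is verifying this last identity. I would evaluate each side separately. On the right, direct differentiation collapses, after cancellation of the $R'(1-x)$ terms, to $\frac{\text{d}}{\text{d}x}\bigl(1-R(1-x)-xR'(1-x)\bigr)=xR''(1-x)$. On the left, the chain rule gives $\frac{\text{d}}{\text{d}x}L(y(x))=L'(y)\rho'(1-x)$, using $\frac{\text{d}y}{\text{d}x}=\rho'(1-x)$, while the DE fixed-point equation (\ref{eq:DEforGEC}), $x=f(L(y),\epsilon)\lambda(y)$, supplies the crucial substitution $f(L(y),\epsilon(x))=x/\lambda(y)$, so the left side becomes $\frac{x}{\lambda(y)}L'(y)\rho'(1-x)$. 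Finally, the edge/node-perspective conversions $L'(y)=L'(1)\lambda(y)$ and $R''(1-x)=R'(1)\rho'(1-x)$ reduce both sides to $L'(1)\,x\,\rho'(1-x)$, establishing equality. Combined with the matched boundary value, this gives $P(x)=Q(x)$ for all $x\in[0,1]$, which is (\ref{eq:pxGEC}).
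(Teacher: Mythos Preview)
Your proposal is correct and follows essentially the same approach as the paper: define $Q(x)$ as the right-hand side, check $P(0)=Q(0)=0$, apply Leibniz's rule to the integral term, use the fixed-point relation $f(L(y),\epsilon(x))=x/\lambda(y)$ together with $L'(y)=L'(1)\lambda(y)$ to cancel the endpoint contribution against the derivative of the $R$-term, and identify the remaining parameter term as $h^{\text{EBP}}(x)\epsilon'(x)$. The only cosmetic difference is that the paper first rewrites $\frac{L'(1)}{R'(1)}(1-R(1-x)-xR'(1-x))=L'(1)\int_{0}^{x}u\,\text{d}y(u)$ via integration by parts before differentiating, whereas you differentiate $1-R(1-x)-xR'(1-x)$ directly to obtain $xR''(1-x)$ and then invoke $R''(1-x)=R'(1)\rho'(1-x)$; the two computations are equivalent.
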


\begin{proof}
First, we let \begin{align}
Q(x) & \triangleq\int_{0}^{L(y)}f(t,\epsilon(x))\text{d}t-\frac{L'(1)}{R'(1)}(1-R(1-x)-xR'(1-x))\nonumber \\
 & =\int_{0}^{L(y)}f(t,\epsilon(x))\text{d}t-L'(1)\int_{0}^{x}u\text{d}y(u)\label{eq:Q1}\end{align}
where in (\ref{eq:Q1}), integration by parts is used.

Then, one can use Leibniz's rule to get

\begin{eqnarray}
Q'(x) & = & f(L(y),\epsilon(x))y'L'(y)+\int_{0}^{L(y)}\frac{\partial}{\partial x}f(t,\epsilon(x))\text{d}t-L'(1)xy'\nonumber \\
 & = & \int_{0}^{L(y)}\frac{\partial}{\partial x}f(t,\epsilon(x))\text{d}t\label{eq:Q'1}\\
 & = & \int_{0}^{L(y)}\frac{\partial}{\partial\epsilon(x)}f(t,\epsilon(x))\frac{\text{d}}{\text{d}x}\epsilon(x)\text{d}t\nonumber \\
 & = & \epsilon'(x)\int_{0}^{L(y)}\frac{\partial}{\partial\epsilon(x)}f(t,\epsilon(x))\text{d}t\nonumber \\
 & = & \epsilon'(x)h^{\text{EBP}}(x)\nonumber \\
 & = & P'(x)\label{eq:Q'2}\end{eqnarray}
where (\ref{eq:Q'1}) follows from the DE equation $f(L(y),\epsilon(x))\lambda(y)=x$
and the fact that $\lambda(y)=\frac{L'(y)}{L'(1)}$ while (\ref{eq:Q'2})
follows by taking deravative.

Thus, $Q(x)$ and $P(x)$ may differ by a constant. By seeing that
$P(0)=Q(0)=0$, one must have $P(x)\equiv Q(x)$.\end{proof}
\begin{example}
For the DEC, explicit calculation gives \begin{align*}
P(x) & =\frac{2\epsilon^{2}(x)L(y)}{2-L(y)(1-\epsilon(x))}-\frac{L'(1)}{R'(1)}(1-R(1-x)-xR'(1-x)).\end{align*}

Also, one can see that, for the BEC, this gives same formula as in
\cite{Measson-it08}.\end{example}
\begin{thm}
(Area Theorem for EBP) Consider a d.d. pair $(\lambda,\rho)$ of design
rate $\mathtt{r}$. Then the EBP EXIT curve for the GEC satisfies\[
\int_{0}^{1}h^{\text{EBP}}(x)\text{d}\epsilon(x)=\mathtt{r}.\]
\end{thm}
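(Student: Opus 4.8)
The plan is to recognize the desired integral as the total trial entropy $P(1)$ and then read off its value from the closed form already established in Lemma \ref{lem:Trial Entropy}. Since the EBP-EXIT curve is parametrized by $x\in[0,1]$, we have $\int_0^1 h^{\text{EBP}}(x)\,\text{d}\epsilon(x)=\int_0^1 h^{\text{EBP}}(x)\epsilon'(x)\,\text{d}t=P(1)-P(0)=P(1)$, where $P$ is the trial entropy and $P(0)=0$ was already noted in the proof of Lemma \ref{lem:Trial Entropy}. Thus the entire statement reduces to showing $P(1)=\mathtt{r}$, which I would obtain by substituting $x=1$ directly into (\ref{eq:pxGEC}).

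First I would collect the boundary values of the graph polynomials. For a standard LDPC ensemble (check degrees at least two) one has $\rho(0)=0$, so $y(1)=1-\rho(0)=1$ and hence $L(y(1))=L(1)=1$; similarly $R(0)=0$ and $R'(0)=0$. Feeding these into (\ref{eq:pxGEC}) at $x=1$, the second term collapses to $\frac{L'(1)}{R'(1)}\bigl(1-R(0)-R'(0)\bigr)=\frac{L'(1)}{R'(1)}$, while the first term becomes $\int_0^{1} f(t,\epsilon(1))\,\text{d}t$.

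Next I would pin down the channel parameter at the endpoint. Evaluating (\ref{eq:exeq}) at $x=1$ gives $\epsilon(1)=\xi(1,1)=1$; equivalently, the fixed-point equation reads $f(1,\epsilon(1))=1$, whose solution is $\epsilon=1$ for the GECs under consideration. So the parameter $x=1$ corresponds to the fully-erased channel. For such a channel the detector output carries no information regardless of the \emph{a priori} erasure rate $t$, i.e. $f(t,1)\equiv 1$, whence $\int_0^1 f(t,1)\,\text{d}t=1$ (equivalently $I_s(1)=0$ in (\ref{eq:SIRGEC})). Combining the two terms gives $P(1)=1-\frac{L'(1)}{R'(1)}$, which is precisely the design rate $\mathtt{r}$, completing the argument.

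The computation itself is short; the only point that requires care is the justification of the boundary behaviour, and in particular the channel-theoretic identity $f(t,1)\equiv 1$. I expect this to be the main obstacle: one must argue that at full erasure the output of the channel detector is always erased, uniformly in the \emph{a priori} quality $t$, so that the first term of $P(1)$ equals $1$ rather than some $\epsilon(1)$-dependent quantity. This is immediate for the BEC, DEC and pDEC from the explicit forms of $f$ in (\ref{eq:fDEC}) and (\ref{eq:fpDEC}), and for the general GEC subclass it follows from the modeling assumption that the channel output is a deterministic map of the input plus erasure noise; I would state this explicitly as the hypothesis under which the theorem holds.
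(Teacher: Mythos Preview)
Your proposal is correct and follows essentially the same route as the paper: identify the integral with $P(1)$ via Lemma~\ref{lem:Trial Entropy}, then evaluate the closed form (\ref{eq:pxGEC}) at $x=1$ using $L(y(1))=1$, $\epsilon(1)=1$, $R(0)=R'(0)=0$, and $\int_0^1 f(t,1)\,\mathrm{d}t=1$ to obtain $P(1)=1-L'(1)/R'(1)=\mathtt{r}$. The paper's version is terser and simply writes the chain of equalities $P(1)=\int_0^1 f(t,1)\,\mathrm{d}t-L'(1)/R'(1)=1-L'(1)/R'(1)=\mathtt{r}$ without spelling out the boundary justifications you supply.
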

\begin{proof}
Using the result in Lemma \ref{lem:Trial Entropy}, a direct calculation
reveals that \[
\int_{0}^{1}h^{\text{EBP}}(x)\text{d}\epsilon(x)=P(1)=\int_{0}^{1}f(t,1)\text{d}t-\frac{L'(1)}{R'(1)}=1-\frac{L'(1)}{R'(1)}=\mathtt{r}\]
and the theorem is proven.
\end{proof}

\subsection{Upper Bound on the MAP Threshold}

Because of the optimality of the MAP decoder in the sense that $h^{\text{MAP}}\leq h^{\text{BP}}$
(see Lemma \ref{lem:hvshbp}), one can obtain an upper bound on the
MAP threshold by first finding the largest value $x^{\text{MAP}}$
such that $\int_{x^{\text{MAP}}}^{1}h^{\text{EBP}}(x)\text{d}\epsilon(x)=\mathtt{r}$
and then bound the MAP threshold by the inequality $\epsilon^{\text{MAP}}\leq\bar{\epsilon}^{\text{MAP}}\triangleq\epsilon(x^{\text{MAP}})$.
This technique was introduced by M{\'e}asson \emph{et al.} in \cite{Measson-it08}
in the context of BEC and conjectured to be tight in many scenarios.
In fact, for the whole class of regular LDPC ensembles over the BEC,
this bound was analytically proven to be tight \cite{Measson-isit07}. 

With the ingredients provided in our analysis above, the technique
can also be extended to the whole class of GECs. A corollary of Lemma
\ref{lem:Trial Entropy} implies in a few steps that one can find
$x^{\text{MAP}}$ as the unique solution of $P(x)=0$ in $[x^{\text{BP}},1]$.
From this, it is also clear that, $\bar{\text{\ensuremath{\epsilon}}}^{\text{MAP}}$
for the case of regular LDPC ensembles quickly approaches $\epsilon^{\text{SIR}}$
of the GEC which is formalized by the following theorem.
\begin{thm}
\label{thm:toSIR}Consider the $(l,r)$-regular ensemble. Consider
a fixed design rate $\mathtt{r}=1-\frac{l}{r}$. Then \[
\lim_{l,r\rightarrow\infty,\mathtt{r}\text{\,\ fixed}}\bar{\epsilon}^{\text{MAP}}(l,r)=\epsilon^{\text{SIR}}(\mathtt{r})\]
where $\epsilon^{\text{SIR}}(\mathtt{r})$ is the corresponding erasure
rate when SIR defined in (\ref{eq:SIRGEC}) equals $\mathtt{r}$.\end{thm}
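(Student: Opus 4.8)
The plan is to exploit the explicit form of the ``trial entropy'' $P$ from Lemma \ref{lem:Trial Entropy}, specialized to the $(l,r)$-regular ensemble, and to track the unique root $x^{\text{MAP}}$ of $P(x)=0$ in $[x^{\text{BP}},1]$ as $r\to\infty$ with $l/r=1-\mathtt{r}$ held fixed. For the regular ensemble one has $\lambda(y)=y^{l-1}$, $L(y)=y^{l}$ and $y(x)=1-(1-x)^{r-1}$, so Lemma \ref{lem:Trial Entropy} reads
\[
P(x)=\int_{0}^{L(y)}f(t,\epsilon(x))\,\text{d}t-\frac{l}{r}\,g(x),\qquad g(x)\triangleq 1-(1-x)^{r}-xr(1-x)^{r-1},
\]
with $\epsilon(x)=\xi\!\left(L(y),x/\lambda(y)\right)$ from (\ref{eq:exeq}). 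The defining relation for $\epsilon^{\text{SIR}}$ is $I_{s}(\epsilon^{\text{SIR}})=\mathtt{r}$, i.e.\ $\int_{0}^{1}f(t,\epsilon^{\text{SIR}})\,\text{d}t=1-\mathtt{r}$ by (\ref{eq:SIRGEC}); this determines $\epsilon^{\text{SIR}}$ uniquely because $\epsilon\mapsto\int_{0}^{1}f(t,\epsilon)\,\text{d}t$ is continuous and strictly increasing, running from $0$ at $\epsilon=0$ to $1$ at $\epsilon=1$.

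The key observation is that for every fixed $x\in(0,1]$ the quantities $(1-x)^{r-1}$ and $r(1-x)^{r-1}$ vanish as $r\to\infty$; since $l/r$ is fixed, $l(1-x)^{r-1}\to0$ as well, so $y(x)\to1$, $\lambda(y)\to1$, $L(y)\to1$ and $g(x)\to1$ pointwise. Hence $\epsilon(x)\to\xi(1,x)$ by continuity of $\xi$, and by bounded convergence (using $0\le f\le1$)
\[
P(x)\longrightarrow \widetilde{P}(x)\triangleq\int_{0}^{1}f\big(t,\xi(1,x)\big)\,\text{d}t-(1-\mathtt{r}).
\]
Because $x\mapsto\xi(1,x)$ and $\epsilon\mapsto\int_{0}^{1}f(t,\epsilon)\,\text{d}t$ are both strictly increasing, $\widetilde{P}$ is strictly increasing; with $\widetilde{P}(0^{+})=-(1-\mathtt{r})<0$ and $\widetilde{P}(1)=\mathtt{r}>0$ it has a unique root $x^{\star}\in(0,1)$, characterized by $\xi(1,x^{\star})=\epsilon^{\text{SIR}}$.

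The main obstacle is that this convergence is \emph{not} uniform near $x=0$ (both $L(y)$ and $g$ collapse there), so one cannot simply substitute $x=x^{\text{MAP}}$; I must first confine $x^{\text{MAP}}$ to a compact subinterval of $(0,1)$. I would do this by fixing $x_{0}<x^{\star}<x_{1}$ in $(0,1)$ with $\widetilde{P}(x_{0})<0<\widetilde{P}(x_{1})$ and invoking the pointwise limits at these two points: for all large $r$ one has $P(x_{0})<0<P(x_{1})$, so by the intermediate value theorem $P$ has a root in $(x_{0},x_{1})$. Since $\epsilon(x)\to\xi(1,x)>0$ for fixed $x\in(0,1]$ while the BP threshold $\epsilon^{\text{BP}}=\epsilon(x^{\text{BP}})\to0$ (as is standard for regular ensembles of growing degree), the parameter $x^{\text{BP}}$ must tend to $0$; hence $[x_{0},x_{1}]\subset[x^{\text{BP}},1]$ for large $r$, and that root must coincide with the \emph{unique} root $x^{\text{MAP}}$ in $[x^{\text{BP}},1]$. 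Thus $x^{\text{MAP}}\in(x_{0},x_{1})$, and letting $x_{0},x_{1}\to x^{\star}$ yields $x^{\text{MAP}}\to x^{\star}$, bounded away from $0$ and $1$.

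With $x^{\text{MAP}}$ now confined to a compact subinterval, the pointwise limits apply at $x=x^{\text{MAP}}$, giving $L\big(y(x^{\text{MAP}})\big)\to1$ and $g(x^{\text{MAP}})\to1$. Evaluating $P(x^{\text{MAP}})=0$ gives $\int_{0}^{L(y)}f(t,\bar{\epsilon}^{\text{MAP}})\,\text{d}t=\frac{l}{r}g(x^{\text{MAP}})$; the right-hand side tends to $l/r=1-\mathtt{r}$, while the left-hand side differs from $\int_{0}^{1}f(t,\bar{\epsilon}^{\text{MAP}})\,\text{d}t$ by at most $1-L(y)\to0$ because $f\le1$. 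Therefore $\int_{0}^{1}f(t,\bar{\epsilon}^{\text{MAP}})\,\text{d}t\to1-\mathtt{r}$, i.e.\ $I_{s}(\bar{\epsilon}^{\text{MAP}})\to\mathtt{r}=I_{s}(\epsilon^{\text{SIR}})$, and the strict monotonicity and continuity of $I_{s}$ force $\bar{\epsilon}^{\text{MAP}}\to\epsilon^{\text{SIR}}$, as claimed. The points deserving careful verification are the vanishing of $x^{\text{BP}}$ (needed so the confinement step may invoke the stated uniqueness) and the mild regularity of $f$ and $\xi$ used throughout (continuity, strict monotonicity in $\epsilon$, and $f(t,0)=0$, $f(t,1)=1$).
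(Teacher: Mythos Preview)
Your overall strategy---confine $x^{\text{MAP}}$ to a compact subinterval of $(0,1)$, then pass to the pointwise limit in $P(x^{\text{MAP}})=0$ to obtain $\int_{0}^{1}f(t,\bar{\epsilon}^{\text{MAP}})\,\mathrm{d}t\to 1-\mathtt{r}$ and invert via monotonicity of $I_{s}$---is exactly the paper's. The only substantive difference is in \emph{how} the confinement is achieved. The paper bounds $x^{\text{MAP}}$ from below by a direct comparison: it asserts that $x^{\text{MAP}}_{\text{GEC}}(l,r)\geq x^{\text{MAP}}_{\text{BEC}}(l,r)$ and quotes the known fact that the latter converges to $1-\mathtt{r}$. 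You instead introduce the limit function $\widetilde{P}$, locate its unique root $x^{\star}$, and bracket $x^{\text{MAP}}$ by IVT---but to identify the bracketed root with $x^{\text{MAP}}$ you must force $x^{\text{BP}}$ below your bracket, which is why you need $\epsilon^{\text{BP}}\to 0$. That last fact is not quite as ``standard'' for a general GEC as it is for the BEC, and you rightly flag it; the paper's comparison inequality sidesteps it entirely and gives the lower bound on $x^{\text{MAP}}$ in one line. Your route has the virtue of making the limiting structure explicit (identifying $\xi(1,x^{\star})=\epsilon^{\text{SIR}}$ is a nice byproduct), but the paper's comparison with the BEC is shorter and avoids the extra hypothesis; you may want to swap in that argument for your confinement step.
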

\begin{proof}
First, $x^{\text{MAP}}(l,r)$ must be the solution of $P(x)=0$. For
a fixed rate $\mathtt{r}$, $x^{\text{MAP}}(l,r)$ is bounded away
from zero for $l$ large enough (one can show that $x^{\text{MAP}}(l,r)$
for the GEC is not less than $x_{\text{BEC}}^{\text{MAP}}(l,r)$ for
the BEC and the latter converges to $1-\mathtt{r}$ \cite[Lm. 8]{Kudekar-it11}).
Suppose that all the limits are taken when $l,r\rightarrow\infty$
while $\mathtt{r}$ is kept fixed. Then, we have $(1-x^{\text{MAP}}(l,r))^{r-1}\rightarrow0$
exponentially fast. 

Next, one also sees that \begin{equation}
L(y(x^{\text{MAP}}(l,r)))=(1-(1-x^{\text{MAP}}(l,r))^{r-1})^{l}\rightarrow1\,\,\text{and}\,\,\lambda(y(x^{\text{MAP}}(l,r)))\label{eq:Ly->1}\end{equation}
 which can be obtained from \begin{equation}
\frac{\log\left(1-(1-x^{\text{MAP}}(l,r))^{r-1}\right)}{1/(r-1)}\rightarrow0.\label{eq:lim1}\end{equation}
To see (\ref{eq:lim1}), we apply L'H\^opital's rule and use the
fact that \[
\frac{(1-x^{\text{MAP}}(l,r))^{r-1}}{\left(1-(1-x^{\text{MAP}}(l,r))^{r-1})\right)/(r-1)^{2}}\rightarrow0\]
because the numerator vanishes exponentially while the denominator
only vanishes quadratically fast.

Note that for $(l,r)$-regular ensemble, (\ref{eq:pxGEC}) can be
rewritten as\begin{equation}
P(x)=\int_{0}^{L(y)}f(t,\epsilon(x))\text{d}t+\frac{l}{r}(1-x)^{r-1}(1+(r-1)x)-\frac{l}{r}=0.\label{eq:regP(x)}\end{equation}
Therefore, we can use $P(x^{\text{MAP}}(l,r))=0$ and (\ref{eq:regP(x)}),
(\ref{eq:Ly->1}) to have \[
\int_{0}^{1}f(t,\epsilon(x^{\text{MAP}}(l,r)))\text{d}t\rightarrow\frac{l}{r}=1-\mathtt{r}.\]
In addition, from definition we have $\int_{0}^{1}f(t,\epsilon^{\text{SIR}}(\mathtt{r}))\text{d\ensuremath{t=1-I_{s}(\epsilon^{\text{SIR}}(\mathtt{r}))=1-\mathtt{r}}.}$
Therefore, \[
I_{s}(\bar{\epsilon}^{\text{MAP}}(l,r))\rightarrow I_{s}(\epsilon^{\text{SIR}}(\mathtt{r}))\]
and one has $\bar{\epsilon}^{\text{MAP}}(l,r)\rightarrow\epsilon^{\text{SIR}}(\mathtt{r})$
as $I_{s}(\cdot)$ is a continuous and monotone function.
\end{proof}

\begin{example}
Let us consider the DEC. For rate one-half ensembles, we have $\bar{\epsilon}^{\text{MAP}}(3,6)\approx0.638659$,
$\bar{\epsilon}^{\text{MAP}}(4,8)\approx0.640163$, $\bar{\epsilon}^{\text{MAP}}(5,10)\approx0.640355$,
$\bar{\epsilon}^{\text{MAP}}(7,14)\approx0.640387$, $\bar{\epsilon}^{\text{MAP}}(8,16)\approx0.640388$
that quickly approach $\epsilon^{\text{SIR}}(\frac{1}{2})\approx0.640388$.
This can be partially seen in Fig. \ref{fig:MAPvsSIR} where $\bar{\epsilon}^{\text{MAP}}(4,8)$
is already very close to $\epsilon^{\text{SIR}}.$%
\begin{figure}
\begin{centering}
\includegraphics[scale=0.65]{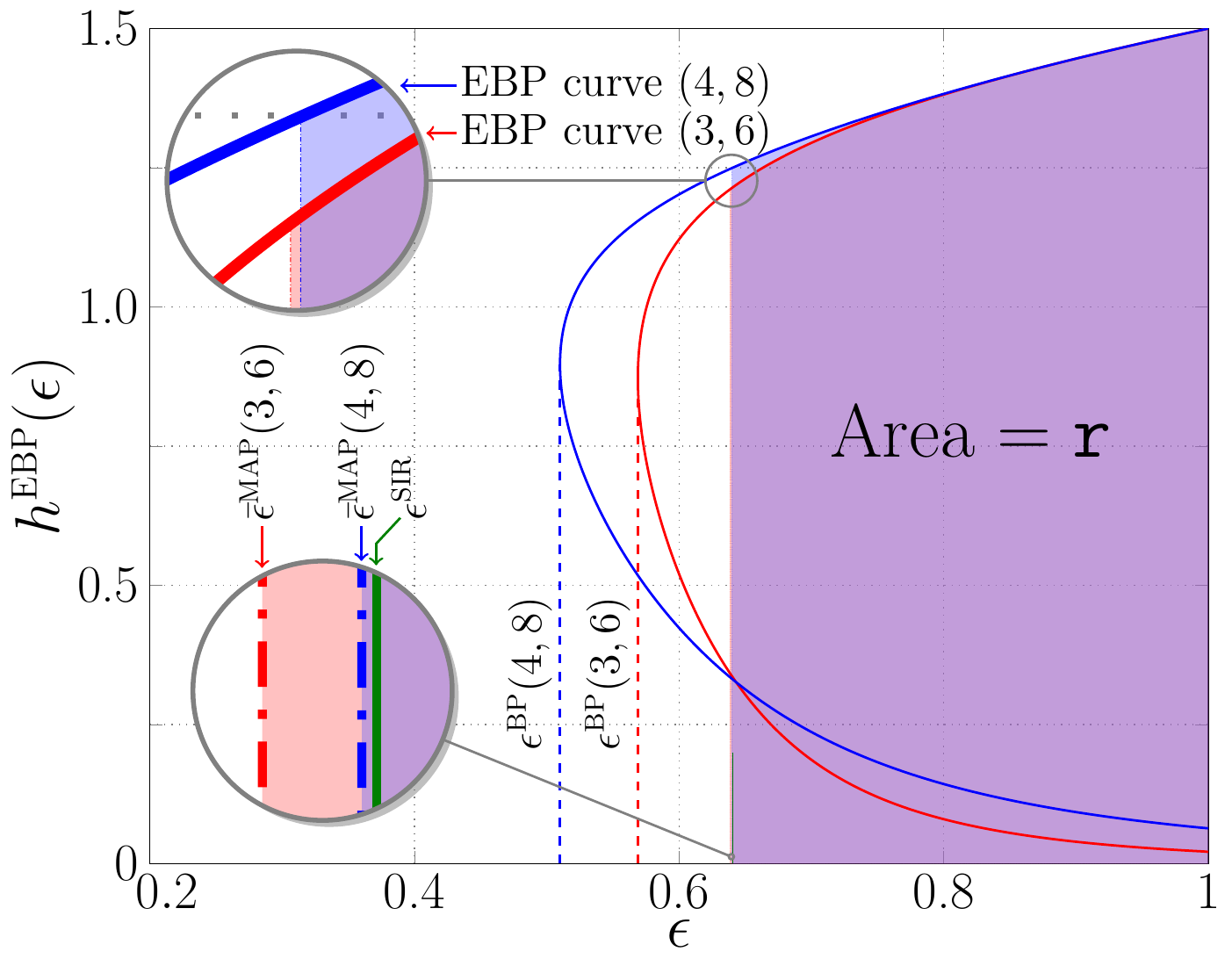}
\par\end{centering}

\caption{\label{fig:MAPvsSIR}EBP EXIT curves for $(3,6)$ and $(4,8)$ regular
LDPC ensembles over the DEC. Projection of the left most point of
the curves on to the $\epsilon$-axis allows one to determine $\epsilon^{\text{BP}}$.
Setting the area under the EBP curves to be equal to the design rate
$\mathtt{r}$ can help find $\bar{\epsilon}^{\text{MAP}}$.}

\end{figure}

\end{example}

\subsection{Tightness of the upper bound}

In this section, we discuss the tightness of the $\bar{\epsilon}^{\text{MAP}}$
bounding technique. Assume that the joint BP decoder is run on the
joint graph of the LDPC code and GEC. Since one never gets errors
in the GEC, the joint BP decoder must reach a FP where no more bit
nodes can be decoded. At this FP, one obtains a residual graph (see
\cite[Ch. 3]{RU-2008}) by removing all the known bit nodes as well
as their neighboring check nodes and the edges connecting them. Then,
one can follow the general procedure to show that the MAP bounding
technique is tight, i.e., by seeing at channel erasure rate $\bar{\epsilon}^{\text{MAP}}$,
the design rate of the residual graph is zero and providing numerical
evidence that for this residual graph, the actual rate converges to
the design rate as the blocklength $n\rightarrow\infty$. 

We start with the following lemma.
\begin{lemma}
\label{lem:ResidualGEC}Consider a d.d. pair $(\lambda,\rho)$ and
the GEC with channel erasure rate $\epsilon$. First, run the joint
BP decoder until it reaches a FP so that we obtain a residual graph.
Next, use the remaining channel constraints to merge all bit nodes
that must have the same value. The expected check node d.d. of the
residual graph%
\footnote{The check node and bit node d.d. are normalized with respect to the
original graph.%
} is given by\begin{equation}
\tilde{R}_{\epsilon}(z)=R(1-x+zx)-R(1-x)-zxR'(1-x)\label{eq:RzGEC}\end{equation}
where $x$ is the FP of DE and $y=1-\rho(1-x)$. Furthermore, if the
expected bit node d.d. is \begin{equation}
\tilde{L}_{\epsilon}(z)=\int_{0}^{L(yz)}f(t,\epsilon)\text{d}t\label{eq:LzGEC}\end{equation}
then at $\epsilon=\bar{\epsilon}^{\text{MAP}}$, the design rate of
the residual graph $\tilde{\mathtt{r}}_{\bar{\epsilon}^{\text{MAP}}}$
equals zero.\end{lemma}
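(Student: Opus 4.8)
The plan is to prove the lemma in two independent parts: first, to derive the residual check-node degree distribution (\ref{eq:RzGEC}) by a peeling-decoder argument, and second, to use Lemma \ref{lem:Trial Entropy} to show that the design rate of the residual graph vanishes exactly at $\bar{\epsilon}^{\text{MAP}}$. The key simplification, already noted in the paragraph preceding the lemma, is that over a GEC the joint BP decoder never introduces errors, so its fixed point coincides with the stopping configuration of an erasure peeling decoder; this lets me borrow the residual-graph machinery of \cite{Measson-it08} and \cite[Ch. 3]{RU-2008} developed for the BEC, since the code check nodes see only erasures.

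First I would derive (\ref{eq:RzGEC}). At the DE fixed point the bit-to-check messages are erased independently with probability $x$ under the usual tree/independence assumption, and these statistics concentrate about their expectations as $n\to\infty$. Hence a check node of original degree $i$ retains each of its $i$ edges (i.e., sees an undetermined neighbor along it) independently with probability $x$, so its residual degree is $\mathrm{Binomial}(i,x)$ and the raw residual check-degree generating function, normalized with respect to the original graph, is $\sum_i R_i (1-x+zx)^i = R(1-x+zx)$. Because the decoder has reached a fixed point, the residual graph retains only checks of degree $\geq 2$: every residual check of degree $0$ (fully resolved) or degree $1$ (which would decode its unique remaining neighbor) has already been removed. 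Subtracting the corresponding $z^0$ and $z^1$ coefficients of $R(1-x+zx)$, namely $R(1-x)$ and $zxR'(1-x)$, yields exactly (\ref{eq:RzGEC}). The subsequent merging of bit nodes forced to share a value is a bit-side operation; it alters a given check's degree only when two of its neighbors are merged, an event of probability $O(1/n)$, so it leaves the expected check-node degree distribution unchanged in the limit, its effect being absorbed instead into the bit-node degree distribution (\ref{eq:LzGEC}).

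For the design-rate computation I take (\ref{eq:LzGEC}) as given. Counting nodes with the stated normalization, and using that the original graph has $\frac{L'(1)}{R'(1)}N$ check nodes for $N$ bit nodes, the residual graph has $\tilde{L}_{\epsilon}(1)\,N$ bit nodes and $\tilde{R}_{\epsilon}(1)\,\frac{L'(1)}{R'(1)}N$ check nodes, so its design rate is
\[
\tilde{\mathtt{r}}_{\epsilon}=\frac{\tilde{L}_{\epsilon}(1)-\frac{L'(1)}{R'(1)}\tilde{R}_{\epsilon}(1)}{\tilde{L}_{\epsilon}(1)}.
\]
Evaluating the two generating functions at $z=1$ gives $\tilde{R}_{\epsilon}(1)=1-R(1-x)-xR'(1-x)$ and $\tilde{L}_{\epsilon}(1)=\int_{0}^{L(y)}f(t,\epsilon)\text{d}t$, so the numerator is precisely the trial entropy $P(x)$ of Lemma \ref{lem:Trial Entropy} as written in (\ref{eq:pxGEC}). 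Since $\bar{\epsilon}^{\text{MAP}}=\epsilon(x^{\text{MAP}})$ and $x^{\text{MAP}}$ is characterized as the solution of $P(x)=0$, at $\epsilon=\bar{\epsilon}^{\text{MAP}}$ the numerator vanishes while $\tilde{L}_{\bar{\epsilon}^{\text{MAP}}}(1)>0$ (because $x^{\text{MAP}}>0$ and $f>0$ for positive erasure rate), and therefore $\tilde{\mathtt{r}}_{\bar{\epsilon}^{\text{MAP}}}=0$.

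The main obstacle is the rigorous justification of the first part: establishing that the residual degree distributions concentrate about the expressions above requires a Wormald-type differential-equation or martingale argument adapted to the \emph{joint} code-channel graph, where the channel memory correlates neighboring bits and the merging step couples bits across the graph. Verifying that these correlations are weak enough not to perturb the limiting degree distribution, and in particular that the merging contributes only lower-order corrections on the check side while producing precisely (\ref{eq:LzGEC}) on the bit side, is the delicate point. By contrast, once (\ref{eq:RzGEC}) and (\ref{eq:LzGEC}) are in hand, the design-rate identity is an immediate algebraic consequence of Lemma \ref{lem:Trial Entropy}.
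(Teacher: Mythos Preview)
Your proposal is correct and, for the check-node distribution, follows exactly the paper's binomial-thinning argument. The one notable difference is in how you reach $\tilde{\mathtt r}_\epsilon = P(x)/\tilde L_\epsilon(1)$: the paper computes $\tilde L'_\epsilon(1)/\tilde R'_\epsilon(1)$ explicitly, shows via the DE fixed-point relation $f(L(y),\epsilon)\lambda(y)=x$ that this ratio equals $L'(1)/R'(1)$, and then applies the average-degree form of the design rate; you instead count residual bit and check nodes directly from the stated normalization and write $\tilde{\mathtt r}_\epsilon = 1 - \frac{L'(1)}{R'(1)}\,\tilde R_\epsilon(1)/\tilde L_\epsilon(1)$ without ever invoking the fixed-point equation. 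Your route is shorter and avoids the derivative calculation, while the paper's route has the side benefit of explicitly verifying that the hypothesized $\tilde L_\epsilon$ is edge-consistent with the derived $\tilde R_\epsilon$ (that is, that the two degree distributions could come from the same bipartite graph). Your closing caveat about concentration and the effect of merging is well taken; the paper operates at the same level of rigor and does not address it either.
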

\begin{proof}
Consider the original graph at the FP and let $x$ be the average
erasure rate from a bit node to a check node. Pick a check node of
degree $j$ in the original graph. We can obtain a check node of degree
$i\leq j$ in the residual graph by removing all $(j-i)$ edges with
known values. Note that $i\geq2$ since a check node of degree one
must not be in the residual graph. The remaining $i$ edges of this
check node must contain erasure messages. The probability for this
event is ${j \choose i}(1-x)^{(j-i)}x^{i}$. Thus, the check node
d.d. for the residual graph (normalized by the number of check nodes
in the original graph) is%
\footnote{This formula is the same as the check node d.d. for residual graph
left by the peeling decoder for the BEC, obtained via solving a differential
equation in \cite{Luby-it01}.%
}\begin{align*}
\tilde{R}_{\epsilon}(z) & =\sum_{j\geq2}R_{j}\sum_{i=2}^{j}{j \choose i}(1-x)^{(j-i)}(xz)^{i}\\
 & =R(1-x+zx)-R(1-x)-zxR'(1-x)\end{align*}
and (\ref{eq:RzGEC}) holds.

Suppose the bit node d.d. is given by (\ref{eq:LzGEC}), one has $\tilde{L}'_{\epsilon}(z)=y'L'(yz)f(L(yz),\epsilon)$
and $\tilde{R}'_{\epsilon}(z)=xR'(1-x+zx)-xR'(1-x).$ Therefore, one
obtains\begin{align}
\frac{\tilde{L}'_{\epsilon}(1)}{\tilde{R}'_{\epsilon}(1)} & =\frac{yL'(y)f(L(y),\epsilon)}{xR'(1)(1-\rho(1-x))}\nonumber \\
 & =\frac{L'(1)}{R'(1)}\cdot\frac{\lambda(y)f(L(y),\epsilon)}{x}\nonumber \\
 & =\frac{L'(1)}{R'(1)}\label{eq:LoverR1}\end{align}
by using (\ref{eq:DEforGEC}), $y=1-\rho(1-x)$ and the known facts
that $L'(y)=L'(1)\lambda(y)$ and $R'(1-x)=R'(1)\rho(1-x)$. 

Note that the standard d.d. pair from the node perspective of the
residual graph is $\left(\frac{\tilde{L}_{\epsilon}(z)}{\tilde{L}_{\epsilon}(1)},\frac{\tilde{R}_{\epsilon}(z)}{\tilde{R}_{\epsilon}(1)}\right)$
and the corresponding design rate is then \[
\tilde{\mathtt{r}}_{\epsilon}=1-\frac{\tilde{L}_{\epsilon}'(1)}{\tilde{R}_{\epsilon}'(1)}\cdot\frac{\tilde{R}_{\epsilon}(1)}{\tilde{L}_{\epsilon}(1)}.\]

Using (\ref{eq:LoverR1}), it now is clear that\[
\tilde{\mathtt{r}}_{\epsilon}=1-\frac{L'(1)}{R'(1)}\cdot\frac{\tilde{R}_{\epsilon}(1)}{\tilde{L}_{\epsilon}(1)}=\frac{P(x)}{\tilde{L}_{\epsilon}(1)}\]
where the last equality follows from (\ref{eq:RzGEC}), (\ref{eq:LzGEC})
and (\ref{eq:pxGEC}).

By considering a special case $\epsilon=\bar{\epsilon}^{\text{MAP}}$,
one has $\tilde{\mathtt{r}}_{\bar{\epsilon}^{\text{MAP}}}=P(x^{\text{MAP}})/\tilde{L}_{\bar{\epsilon}^{\text{MAP}}}(1)=0.$\end{proof}
\begin{remrk}
For the BEC, the bit node d.d. given in (\ref{eq:LzGEC}) matches
the known result in \cite[Th. 3.106]{RU-2008}. In fact, this also
holds for the DEC case which can be shown by the following lemma.\end{remrk}
\begin{lemma}
\label{lem:Residual d.d.-1}Consider a d.d. pair $(\lambda,\rho)$
and the DEC with erasure probability $\epsilon$. The expected bit
node d.d. in this case follows the form (\ref{eq:LzGEC}), i.e., \begin{equation}
\tilde{L}_{\epsilon}(z)=\frac{2\epsilon^{2}L(yz)}{2-L(yz)(1-\epsilon)}=\sum_{k=0}^{\infty}\epsilon^{2}\left(\frac{1-\epsilon}{2}\right)^{k}L(yz)^{k+1}\label{eq:Lz}\end{equation}
Consequently, at $\epsilon=\bar{\epsilon}^{\text{MAP}}$ the design
rate of the residual graph equals zero.\end{lemma}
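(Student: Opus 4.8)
The plan is to prove the displayed identity~(\ref{eq:Lz}) in two algebraic moves and then supply the combinatorial derivation that actually pins the residual bit-node d.d.\ of the DEC to the template~(\ref{eq:LzGEC}); the vanishing of the residual design rate is then immediate from Lemma~\ref{lem:ResidualGEC}. First I would check the algebra: inserting $f_{\text{DEC}}$ from~(\ref{eq:fDEC}) into~(\ref{eq:LzGEC}) and integrating (e.g.\ with $s=2-t(1-\epsilon)$) gives $\int_0^{L(yz)} f_{\text{DEC}}(t,\epsilon)\,\text{d}t = \frac{2\epsilon^2 L(yz)}{2-L(yz)(1-\epsilon)}$, the first equality of~(\ref{eq:Lz}); writing the denominator as $2\bigl(1-\tfrac{1-\epsilon}{2}L(yz)\bigr)$ and expanding the geometric series yields $\sum_{k\ge0}\epsilon^2\bigl(\tfrac{1-\epsilon}{2}\bigr)^k L(yz)^{k+1}$, the second form. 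The point of the series form is that it is exactly what a direct count of the residual graph produces, so matching it term by term is what upgrades~(\ref{eq:LzGEC}) from an assumption (as it appears in Lemma~\ref{lem:ResidualGEC}) to a fact for the DEC.

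The heart of the argument is this direct count, which is the bit-node analogue of the check-node computation~(\ref{eq:RzGEC}) in Lemma~\ref{lem:ResidualGEC}. The key observation specific to the DEC is that $Z_i=X_i-X_{i-1}$ is \emph{ternary}: a received $Z_i$ equals $0$ --- forcing $X_{i-1}$ and $X_i$ to share a single degree of freedom, a pure \emph{merge} carrying no absolute information --- with conditional probability $\tfrac12$, and equals $\pm1$ --- which pins both $X_{i-1}$ and $X_i$ to known values --- with probability $\tfrac12$. Under the coset/scrambling symmetrization of Section~\ref{sub:jointBP} these value patterns are independent of the erasure pattern and of the code-side messages, so adjacent bits are merged with probability $(1-\epsilon)/2$ per link. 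A surviving super-node built from $k+1$ consecutive bits therefore requires its $k$ internal channel symbols to be received-and-zero, contributing $\bigl(\tfrac{1-\epsilon}{2}\bigr)^k$; its two bounding symbols to be erased, contributing $\epsilon^2$, since a received bounding symbol would either extend the run (if $0$) or pin it (if $\pm1$); and each of the $k+1$ constituent bits to be undetermined by the code, i.e.\ to have all its incoming check messages erased, which for a random bit of degree $d$ happens with probability $y^d$ and leaves $d$ residual edges. Because the merged super-node inherits the union of its constituents' surviving code-edges, independence in the tree/DE limit makes the per-bit generating function $\sum_d L_d (yz)^d = L(yz)$ enter to the $(k+1)$-th power. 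The expected number of such super-nodes per original bit is thus $\epsilon^2\bigl(\tfrac{1-\epsilon}{2}\bigr)^k L(yz)^{k+1}$, and summing over $k\ge0$ reproduces~(\ref{eq:Lz}); in particular the DEC residual bit-node d.d.\ is of the form~(\ref{eq:LzGEC}).

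With that established, the rate conclusion needs no new work: since the DEC bit-node d.d.\ matches~(\ref{eq:LzGEC}) and the check-node d.d.\ is still~(\ref{eq:RzGEC}) (which is channel-independent, depending only on the FP value $x$), Lemma~\ref{lem:ResidualGEC} applies verbatim and gives $\tilde{\mathtt{r}}_{\bar{\epsilon}^{\text{MAP}}}=P(x^{\text{MAP}})/\tilde L_{\bar{\epsilon}^{\text{MAP}}}(1)=0$ via~(\ref{eq:pxGEC}) and $P(x^{\text{MAP}})=0$. I expect the main obstacle to be the combinatorial step, and within it the factor $\tfrac12$: the naive ``linked-bits'' picture (as for a pure XOR channel) would give $(1-\epsilon)^k$ and the wrong closed form, so one must correctly attribute the merge-versus-pin dichotomy of a received ternary $Z_i$ and justify, via the symmetrized all-zero analysis, that the merge probability is $\tfrac12$ independently of the erasure and code-side randomness.
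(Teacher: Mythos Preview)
Your proposal is correct and follows essentially the same approach as the paper: both identify the residual trellis section as a maximal run of $k+1$ bits with the pattern $(?,\underbrace{0,\ldots,0}_{k},?)$, assign it probability $\epsilon^{2}\bigl(\tfrac{1-\epsilon}{2}\bigr)^{k}$, merge the $k+1$ bits into a single super-node whose degree generating function is $L(yz)^{k+1}$ (by independence of the $k+1$ per-bit factors $\sum_d L_d(yz)^d$ in the DE limit), and sum over $k$; the design-rate conclusion then comes straight from Lemma~\ref{lem:ResidualGEC}. Your write-up is in fact a bit more explicit than the paper's in two places---you spell out the integral evaluation that identifies the closed form with $\int_0^{L(yz)} f_{\text{DEC}}(t,\epsilon)\,\text{d}t$, and you articulate why the boundary outputs must specifically be \emph{erased} (received-zero would extend the run, received-nonzero would pin the endpoint)---but these are expository refinements, not a different argument.
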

\begin{proof}
The bit nodes in the residual graph must connect to the trellis section
of the form depicted in Fig. \ref{fig:Residual Trellis-1} for some
$k\in\mathbb{N}$ (otherwise, the joint BP decoder can still decode).

\begin{figure}[H]
\begin{centering}
\includegraphics[scale=0.75]{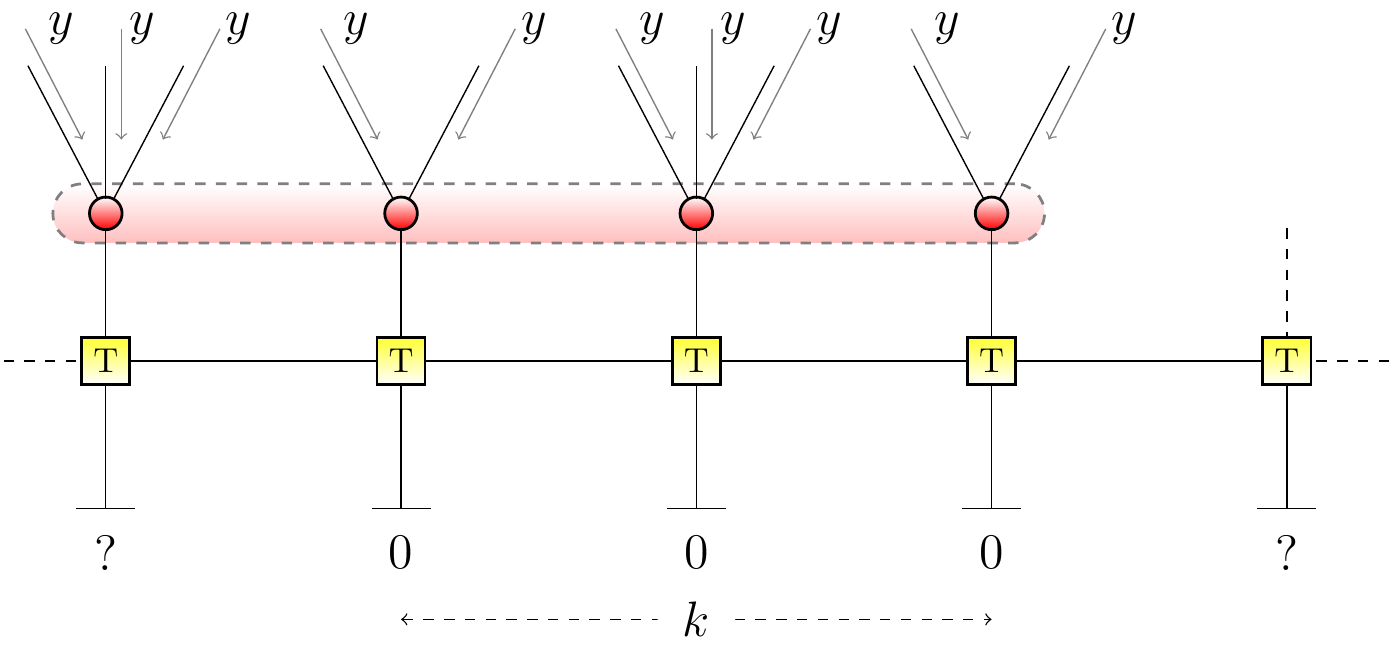}
\par\end{centering}

\caption{\label{fig:Residual Trellis-1}A trellis section in the residual graph
for the DEC. The notation {}``$?$'' denotes that an erasure is
received at the channel output. One can form a larger bit node by
merging all the bit nodes that attach to this trellis section.}

\end{figure}
The probability of the trellis configuration $(?,\overbrace{0,\ldots,0}^{k},?)$,
where $?$ indicates an erasure, is $\epsilon^{2}\left(\frac{1-\epsilon}{2}\right)^{k}$.
Given the above trellis configuration, if all messages from check
nodes to the bit nodes that attach to this trellis section are {}``$?$''
then \emph{all} these bit nodes remain in the residual graph. On the
other hand, if at least one of the messages is not {}``$?$'', then
the joint BP decoder can decode and then remove \emph{all} these bit
nodes from the residual graph. Therefore, one can consider all the
bit nodes that attach to such a trellis section as one larger bit
node whose degree is the sum of the $k+1$ component degrees. The
generating function for this sum of $k+1$ i.i.d. random variables
is $L(z)^{k+1}$. This is quite similar to the graph reduction technique
discussed in \cite{Pfister-it07} for IRA/ARA codes.

From the above analysis and since each edge is associated with erasure
rate $y$, the d.d. (normalized by the number of bit nodes in the
original graph) of residual graph after graph reduction is then given
by\[
\tilde{L}_{\epsilon}(z)=\sum_{k=0}^{\infty}\epsilon^{2}\left(\frac{1-\epsilon}{2}\right)^{k}L(yz)^{k+1}=\frac{2\epsilon^{2}L(yz)}{2-L(yz)(1-\epsilon)}.\]

\end{proof}
From the above analysis, once one has $\tilde{\mathtt{r}}_{\bar{\epsilon}^{\text{MAP}}}=0$,
the final missing piece to prove the tightness of the MAP upper bound
is to show that the actual rate of the residual graph is equal to
its design rate with high probability (when the blocklength tends
to $\infty$)%
\footnote{If this is true, then the MAP decoder can decode perfectly at $\bar{\epsilon}^{\text{MAP}}$
and $\bar{\epsilon}^{\text{MAP}}=\epsilon^{\text{MAP}}$.%
}. While a general proof for this still requires some analytic work,
one can use the test in \cite[Lm. 3.22]{RU-2008} to numerically verify
if this is true. To do this, one just needs to show that the function
$\Psi(u)$ introduced in \cite[Lm. 3.22]{RU-2008}, for the residual
graph, has the following property: $\Psi(u)\leq0$ in the interval
$[0,1]$ with equality only at $u=0$ and $u=1$. For our case, the
bit node d.d. for the residual graph from (\ref{eq:LzGEC}) might
have unbounded degrees as in (\ref{eq:Lz}) for the DEC case. However,
for this DEC case, since the fraction of bit nodes, for the $(l,r)$-regular
ensemble, that have degree $l(k+1)$ is upper bounded by $(\frac{1}{2})^{k}$
and therefore $\tilde{L}_{\epsilon}(z)$ has an exponentially vanishing
tail, one can truncate the series $\tilde{L}_{\epsilon}(z)$ at some
large enough $k$ and obtain the result with a negligible error. For
example, one can truncate $\tilde{L}_{\epsilon}(z)$ at $k=20$ and
for the $(3,6)$-regular ensemble, the truncated version of $\Psi(u)$
is numerically shown to satisfy the desired property in Fig. \ref{fig:Psi(3,6)}.%
\begin{figure}
\begin{centering}
\includegraphics[scale=0.65]{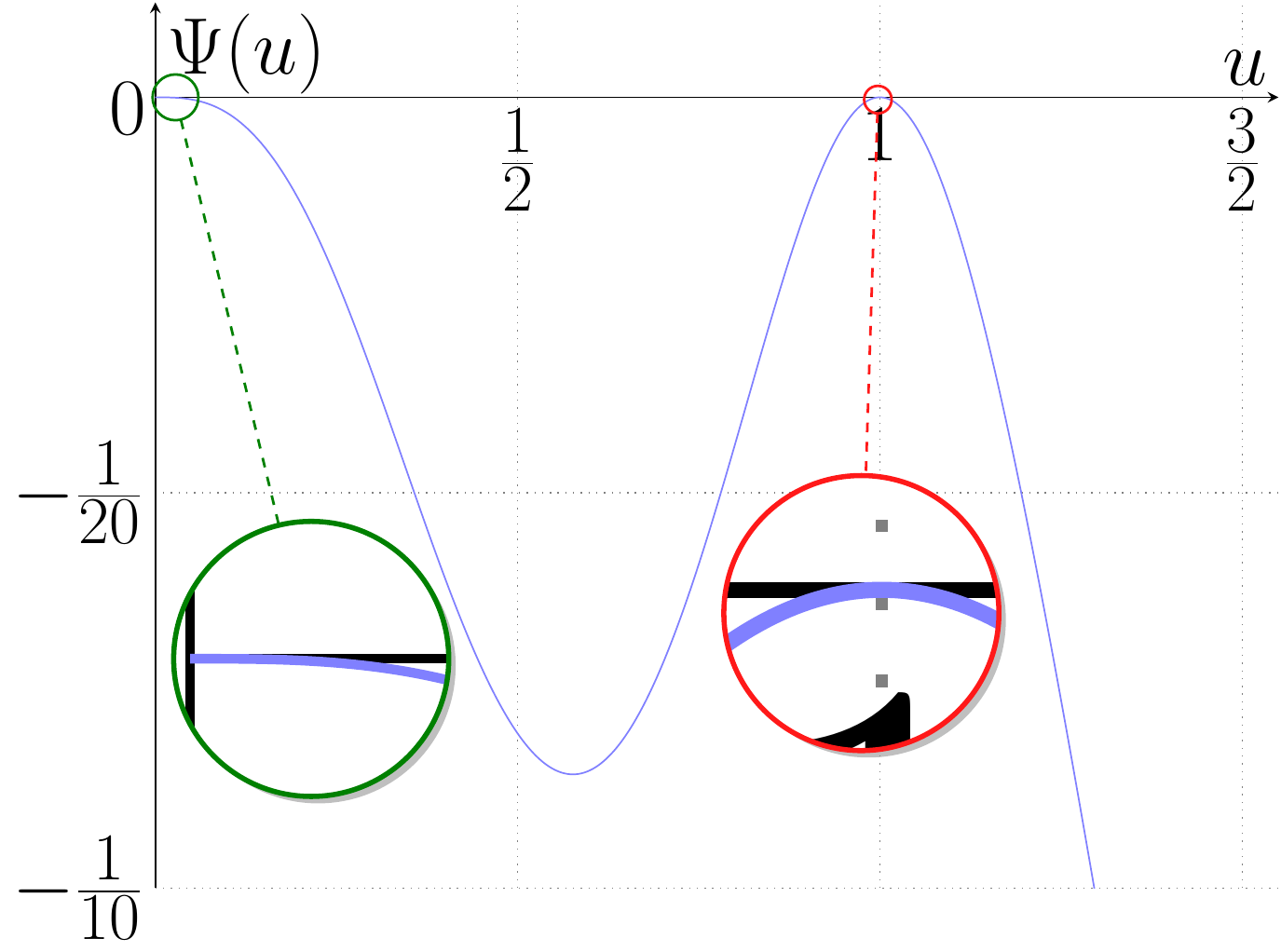}
\par\end{centering}

\caption{\label{fig:Psi(3,6)}Function $\Psi(u)$ for the residual graph obtained
after joint BP decoding of the $(3,6)$-regular LDPC ensemble over
the DEC. This shows numerically that the MAP upper bound is tight
in this case. }

\end{figure}

\subsection{Spatially-Coupled Codes for the GEC}

Consider the $(l,r,L,w$) spatially-coupled ensemble over the GEC.
The joint code/channel graph is similar to the one in Fig. \ref{fig:JGraph(3,6,L)}
which is for the $(l,r,L)$ ensemble. We also follow the DE equation
discussed in \cite{Kudekar-isit11-DEC} to compute the BP thresholds
of the coupled ensembles. The main difference is that we use the correct
EBP curves with their operational meaning instead of the EXIT-like
ones used in \cite{Kudekar-isit11-DEC}. Let $x_{i}^{(\ell)}$ denote
the expected erasure rate at iteration $\ell$ from bit nodes at position
$i$ to check nodes. For $i\notin[1,L]$, set $x_{i}^{(\ell)}=0$.
Let us define\begin{align*}
g(x_{i-w+1},\ldots,x_{i+w-1}) & \triangleq\left(1-\frac{1}{w}\sum_{j=0}^{w-1}\left(1-\frac{1}{w}\sum_{k=0}^{w-1}x_{i+j-k}\right)^{r-1}\right)^{l-1},\\
\Gamma(x_{i-w+1},\ldots,x_{i+w-1}) & \triangleq\left(1-\frac{1}{w}\sum_{j=0}^{w-1}\left(1-\frac{1}{w}\sum_{k=0}^{w-1}x_{i+j-k}\right)^{r-1}\right)^{l}.\end{align*}
 The DE equation for the joint BP decoder can be written as\begin{align*}
x_{i}^{(\ell+1)} & =f(\Gamma(x_{i-w+1}^{(\ell)},\ldots,x_{i+w-1}^{(\ell)}),\epsilon)\cdot g(x_{i-w+1}^{(\ell)},\ldots,x_{i+w-1}^{(\ell)})\end{align*}
for $i\in[1,L]$. To compute both the stable and unstable FPs of
DE, one can use the fixed entropy DE procedure outlined in \cite[Sec. VIII]{Measson-it09}
where the normalized entropy of a constellation $\underline{x}^{(\ell)}=(x_{1}^{(\ell)},\ldots,x_{L}^{(\ell)})$,
which is defined as $\chi(\underline{x}^{(\ell)})=\frac{1}{L}\sum_{i=1}^{L}x_{i}^{(\ell)}$,
is kept constant at every iteration by varying the channel parameter.
With each FP $\underline{x}$ obtained, one can compute the EBP EXIT
value of the spatially-coupled ensemble as $\frac{1}{L}\sum_{i=1}^{L}h^{\text{EBP}}(x_{i})$.

\begin{figure}
\begin{centering}
\includegraphics[scale=0.65]{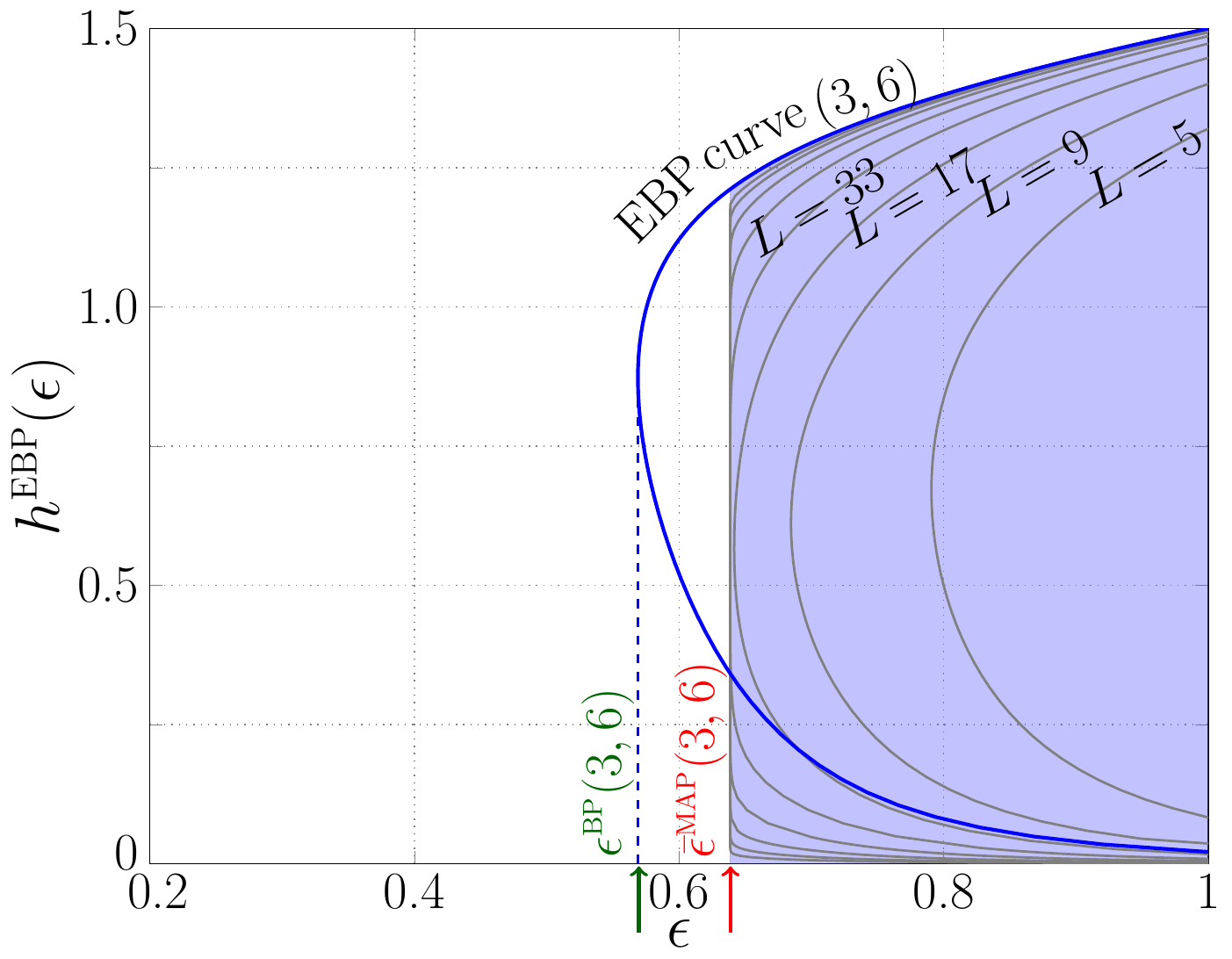}
\par\end{centering}

\caption{\label{fig:EBP-EXIT(3,6,L,5)}EBP EXIT curves for $(3,6,L,5)$ over
the DEC with $L=2\hat{L}+1$ where $\hat{L}=2,4,8,16,32,64,128,246$.
For small values of $L$, the increase in threshold can be explained
by the large rate-loss. As $L$ grows larger, the rate loss becomes
negligible and the curves keep moving left, but they saturate at the
MAP threshold of the underlying regular ensemble.}

\end{figure}

\begin{figure}
\begin{centering}
\includegraphics[scale=0.65]{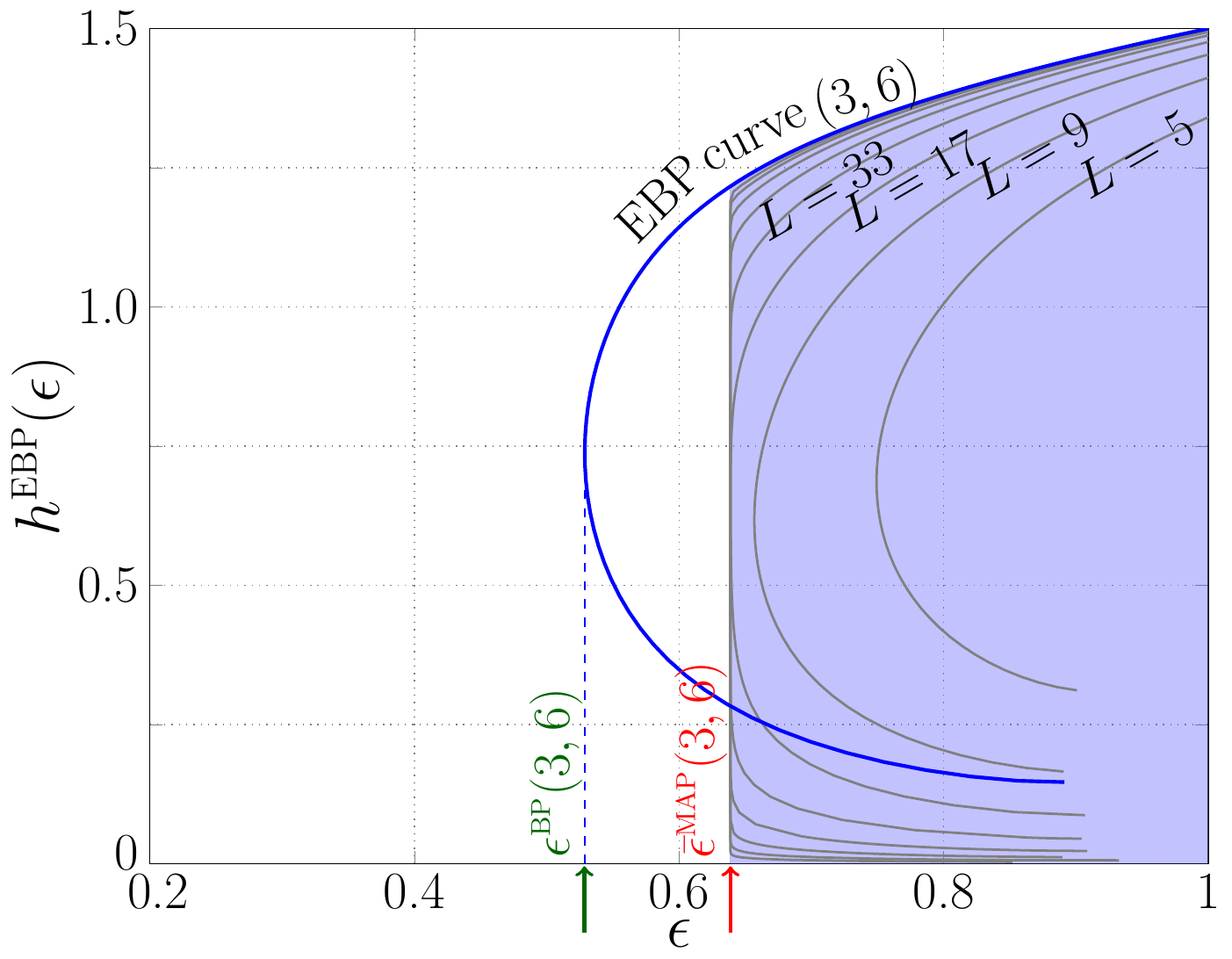}
\par\end{centering}

\caption{\label{fig:EBP-EXIT(3,6,L,5)-1}EBP EXIT curves for $(3,6,L,5)$ over
the pDEC with $L=2\hat{L}+1$ where $\hat{L}=2,4,8,16,32,64,128,246$.
Threshold saturation can also be observed for this case.}

\end{figure}

The threshold saturation effect of coupling can be nicely seen by
plotting the EBP EXIT curves for the uncoupled and coupled codes.
For the DEC, Fig. \ref{fig:EBP-EXIT(3,6,L,5)} shows the EBP curves
for the $(3,6,L,5)$ ensembles with various $L$ along with the EBP
curve of the underlying $(3,6)$-regular ensemble. From the EBP curves,
one can determine $\epsilon^{\text{BP}}(3,6)\approx0.56892$ and $\bar{\epsilon}^{\text{MAP}}(3,6)\approx0.63866$.
The BP thresholds of spatially-coupled ensembles for small $L$ due
to rate-loss can have larger values, e.g., $\epsilon^{\text{BP}}(3,6,17,6)\approx0.64170>\bar{\epsilon}^{\text{MAP}}(3,6)$.
However, for a wide range of $L$, i.e., $L=33,65,129,257,513$, we
observe that $\epsilon^{\text{BP}}(3,6,L,5)\approx0.63866$ which
is essentially $\bar{\epsilon}^{\text{MAP}}(3,6)$ while the rate
loss gradually becomes insignificant. In \cite{Kudekar-isit11-DEC}
, Kudekar and Kasai provided a similar plot but here we include the
MAP threshold estimate $\bar{\epsilon}^{\text{MAP}}$ and use the
EXIT function $h^{\text{EBP}}$ instead of the EXIT-like $L(y)$ in
\cite{Kudekar-isit11-DEC}. Similarly, one can also verify the threshold
saturation over the pDEC channel as seen in Fig. \ref{fig:EBP-EXIT(3,6,L,5)-1}.
For the pDEC, the\textbf{ }BP threshold for $(3,6)$-regular ensemble
is $\epsilon^{\text{BP}}(3,6)\approx0.52877$ and by using spatial
coupling, the BP threshold can be boosted to $\epsilon^{\text{BP}}(3,6,L,5)\approx\bar{\epsilon}^{\text{MAP}}(3,6)\approx0.63877$
with negligible rate loss for $L$ large. 

Even though the threshold saturation effect has been only shown numerically
for the DEC and pDEC, the method is readily applicable to the whole
class of GECs. Still, the analytic proof for threshold saturation
remains open for the GEC. Such a proof combining with Theorem \ref{thm:toSIR}
would essentially demonstrate the SIR-achieving capability of spatially-coupled
ensembles.

\section{\label{sec:General-ISI}General ISI Channels}

In this section, we shift our focus to ISI channels with more general
noise models. The MAP upper bound for general BMS channels was presented
by M{\'e}asson \emph{et al.} and conjectured to be tight \cite{Measson-it09}.
For general ISI channels, we apply a similar technique to give an
estimate of the MAP threshold of the underlying uncoupled ensemble
by first constructing the BP-GEXIT curve that follows an area theorem.
While our method can be used for a wide range of noise models, we
particularly focus on the case of AWGN. The BP thresholds of the corresponding
coupled ensembles are then computed via DE and the threshold saturation
effect is also observed. In addition, simulations on the performance
of the joint BP decoder for coupled codes of finite length are conducted
to validate these thresholds.

\subsection{\label{sub:GEXIT-Curves}GEXIT Curves for the ISI channels}

Consider an ISI channel of memory $\nu$. When the channel input $X_{1}^{n}$
is chosen uniformly at random from a suitable binary linear code%
\footnote{The code is proper \cite[p. 14]{RU-2008} and its dual code contains
no codewords involving only $0$'s and a run of $(\nu+1)$ $1$'s.%
}, the ISI output without noise $Z_{i}$ at some index $i$ is a discrete
random variable characterized by its probability mass function $p_{Z_{i}}(z)$
for all values $z$ in the alphabet $\mathcal{Z}$. For example, in
the case of a dicode channel, $\mathcal{Z}=\{0,+2,-2\}$ and $p_{Z_{i}}(0)=\frac{1}{2},p_{Z_{i}}(+2)=p_{Z_{i}}(-2)=\frac{1}{4}$.
The channel from $Z_{i}$ to $Y_{i}$ is a $|\mathcal{Z}|$-ary input
memoryless channel characterized by its transition probability density
$p_{Y_{i}|Z_{i}}(y|z)$. Without specifying the index, we denote $\mathtt{h}\triangleq H(Z|Y)$
and get\begin{align*}
\mathtt{h} & =H(Z)-I(Z;Y)\\
 & =H(Z)-\int_{-\infty}^{\infty}\sum_{z}p(z)p(y|z)\log_{2}\left\{ \frac{p(y|z)}{\sum_{z'}p(z')p(y|z')}\right\} \text{d}y.\end{align*}

Instead of looking at a particular channel, we assume that the channel
from $Z_{i}$ to $Y_{i}$ is from a smooth family $\{M(\mathtt{h}_{i})\}_{\mathtt{h}_{i}}$
of $|\mathcal{Z}|$-ary input memoryless channels characterized by
conditional entropy $\mathtt{h}_{i}$. A further assumption is made
that all individual channel families are parameterized in a smooth
way by a common parameter%
\footnote{For AWGN case, a convenient choice for $\epsilon$ is $\epsilon=-\frac{1}{2\sigma^{2}}$. %
} $\epsilon$, i.e., $\mathtt{h}_{i}=H(Z_{i}|Y_{i})(\epsilon)$.

With the convention that $y_{\sim i}\triangleq y_{1}^{n}\setminus y_{i}$,
define $\phi_{i}(y_{\sim i})\triangleq\left\{ P_{Z_{i}|Y_{\sim i}}(z|y_{\sim i}):z\in\mathcal{Z}\right\} $
and the random vector $\Phi_{i}\triangleq\phi_{i}(Y_{\sim i})$. Each
value of $\phi_{i}$ is a vector of length $|\mathcal{Z}|$ in the
$(|\mathcal{Z}|-1)$-dimensional probability simplex. The index of
the vector associated with $z\in\mathcal{Z}$ is denoted by $[z]$.
One can see that $\Phi_{i}$ is a sufficient statistic for estimating
$Z_{i}$, i.e., $Z_{i}\rightarrow\Phi_{i}(Y_{\sim i})\rightarrow Y_{\sim i}$
forms a Markov chain%
\footnote{One way to see this is to write \[
P_{Y_{\sim i}|Z_{i}}(y_{\sim i}|z_{i})=\frac{P_{Z_{i}|Y_{\sim i}}(z_{i}|y_{\sim i})}{P_{Z_{i}}(z_{i})}P_{Y_{\sim i}}(y_{\sim i})=\frac{\Phi_{i}\cdot e_{[z_{i}]}^{T}}{P_{Z_{i}}(z_{i})}P_{Y_{\sim i}}(y_{\sim i}),\]
where $e_{[z]}^{T}$ is the standard basis column vector with a 1
in the index $[z]$, and apply the result from \cite[p. 29]{RU-2008}.%
}.
\begin{definitn}
Suppose the initial state in the trellis is $S_{0}$. Let $X_{1}^{n}$
chosen according to $p_{X_{1}^{n}}(x_{1}^{n})$ be the input sequence,
$Z_{1}^{n}$ be the ISI output sequence without noise and $Y_{1}^{n}$
be the final channel output sequence, i.e., $Y_{i}$ is the result
of transmitting $Z_{i}$ over the smooth family $\{M(\mathtt{h}_{i})\}_{\mathtt{h}_{i}}$
of memoryless channels. Then the $i$th GEXIT function is\begin{equation}
\mathsf{G}_{i}(\mathtt{h}_{1},\ldots,\mathtt{h}_{n})=\frac{\partial H(X_{1}^{n}|Y_{1}^{n}(\mathtt{h}_{1},\ldots,\mathtt{h}_{n}),S_{0})}{\partial\mathtt{h}_{i}}\label{eq:gih1hn}\end{equation}
and the average GEXIT function is defined by \[
\mathsf{G}(\mathtt{h}_{1},\ldots,\mathtt{h}_{n})=\frac{1}{n}\sum_{i=1}^{n}\mathsf{G}_{i}(\mathtt{h}_{1},\ldots,\mathtt{h}_{n}).\]
 For the case where all channel families are the same, i.e., $\mathtt{h}_{i}=\mathtt{h}$,
we have \[
\mathsf{G}(\mathtt{h})=\frac{1}{n}\cdot\frac{\text{d}H(X_{1}^{n}|Y_{1}^{n}(\mathtt{h}),S_{0})}{\text{d}\mathtt{h}}.\]
\end{definitn}
\begin{remrk}
The above form of the GEXIT function naturally conforms with the generalized
area theorem. Thus, we are able to write the GEXIT curve and use the
MAP bounding technique.\end{remrk}
\begin{lemma}
Assume that all the channel families are the same%
\footnote{Note that for the case of different channel families, one can still
compute the $i$th GEXIT function as a function of the common parameter
$\epsilon$.%
}, i.e., $\mathtt{h}_{i}=\mathtt{h}$. The $i$th GEXIT function is
given by\begin{align*}
\mathsf{G}_{i}(\mathtt{h}) & =\sum_{z}p(z)\int_{\underline{v}}\mathsf{a}_{i,z}(\underline{v})\kappa_{i,z}(\underline{v})\text{d}\underline{v}\end{align*}
where $\mathsf{a}_{i,z}$ is the distribution of the vector $\Phi_{i}$
given $Z_{i}=z$, $\underline{v}$ is a vector of length $|\mathcal{Z}|$
in the $(|\mathcal{Z}|-1)$-dimensional probability simplex and the
GEXIT kernel (for $i$ and $z$) is%
\footnote{$p(y_{i}|z)$ is dependent on $\mathtt{h}_{i}$ and hence is dependent
on $\epsilon$.%
}\[
\kappa_{i,z}(\underline{v})=\frac{\int_{-\infty}^{\infty}\frac{\partial}{\partial\epsilon}p(y_{i}|z)\log_{2}\left\{ \frac{\sum_{z'}v_{[z']}p(y_{i}|z')}{v_{[z]}p(y_{i}|z)}\right\} \text{d}y_{i}}{\int_{-\infty}^{\infty}\sum_{z}p(z)\frac{\partial}{\partial\epsilon}p(y_{i}|z)\log_{2}\left\{ \frac{\sum_{z'}p(z')p(y_{i}|z')}{p(z)p(y_{i}|z)}\right\} \text{d}y_{i}}.\]
\end{lemma}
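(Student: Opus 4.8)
The plan is to follow the general GEXIT derivation of Méasson \emph{et al.} \cite{Measson-it09}, adapted to the $|\mathcal{Z}|$-ary intermediate variable $Z_i$. First I would pass from $X_1^n$ to $Z_1^n$: given $S_0$ the inputs $X_1^n$ and the noiseless outputs $Z_1^n$ determine each other (the leading tap $a_0\neq 0$), so $H(X_1^n|Y_1^n,S_0)=H(Z_1^n|Y_1^n,S_0)$, exactly as used in Lemma \ref{lem:hvshbp}. Writing $H(Z_1^n|Y_1^n)=H(Z_1^n|Y_{\sim i})-I(Z_1^n;Y_i|Y_{\sim i})$ and noting that only the channel $Z_i\to Y_i$ carries $\mathtt{h}_i$, the first term is constant in $\mathtt{h}_i$, hence
\[
\mathsf{G}_i(\mathtt{h}) = -\frac{\partial}{\partial \mathtt{h}_i}\, I(Z_1^n;Y_i|Y_{\sim i}).
\]

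Next I would collapse the conditional mutual information. Because $Y_i$ is the output of transmitting $Z_i$ alone over the memoryless family, $Y_i$ is conditionally independent of $(Z_{\sim i},Y_{\sim i})$ given $Z_i$; the chain rule then gives $I(Z_1^n;Y_i|Y_{\sim i})=I(Z_i;Y_i|Y_{\sim i})$. Invoking the Markov chain $Z_i\to\Phi_i\to Y_{\sim i}$ (the sufficiency of $\Phi_i$ established in the footnote preceding the statement), the conditional law of $Y_i$ given $Y_{\sim i}$ depends on $Y_{\sim i}$ only through $\Phi_i$, so $I(Z_i;Y_i|Y_{\sim i})=I(Z_i;Y_i|\Phi_i)$. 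Conditioning on $\Phi_i=\underline{v}$ and using $P(Z_i=z\mid\Phi_i=\underline{v})=v_{[z]}$ (the defining property of the posterior as a sufficient statistic), this becomes an average over $\underline{v}\sim p_{\Phi_i}$ of an ordinary mutual information in which $Z_i$ has prior $\underline{v}$ and the output marginal is $q(y_i)\triangleq\sum_{z'}v_{[z']}p(y_i|z')$.

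The analytic core is then differentiating $I(Z_i;Y_i|\Phi_i)$. I would convert $\partial/\partial\mathtt{h}_i$ into $(\partial\mathtt{h}_i/\partial\epsilon)^{-1}\,\partial/\partial\epsilon$ and differentiate under the integral sign in $\epsilon$. By the standard ``self-score'' cancellation---using $\int\frac{\partial}{\partial\epsilon}p(y_i|z)\,\text{d}y_i=\frac{\partial}{\partial\epsilon}\int p(y_i|z)\,\text{d}y_i=0$---the contribution from differentiating the logarithm vanishes after summing against the prior $\underline{v}$, leaving only $\sum_z v_{[z]}\int\frac{\partial}{\partial\epsilon}p(y_i|z)\log_2\frac{p(y_i|z)}{q(y_i)}\,\text{d}y_i$. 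The leading minus sign flips the logarithm to $\log_2\frac{q(y_i)}{p(y_i|z)}$, and a further use of $\int\frac{\partial}{\partial\epsilon}p(y_i|z)\,\text{d}y_i=0$ lets me insert the harmless constant $1/v_{[z]}$ inside the logarithm, producing exactly the numerator of $\kappa_{i,z}(\underline{v})$. Applying the same cancellation to $\partial\mathtt{h}/\partial\epsilon=-\partial I(Z;Y)/\partial\epsilon$ identifies $\partial\mathtt{h}_i/\partial\epsilon$ with the denominator of $\kappa_{i,z}$.

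Finally I would rewrite the $\underline{v}$-average using the identity $p_{\Phi_i}(\underline{v})\,v_{[z]}=p(z)\,\mathsf{a}_{i,z}(\underline{v})$ (Bayes applied to the joint law of $(Z_i,\Phi_i)$), which turns $\int p_{\Phi_i}(\underline{v})\sum_z v_{[z]}\kappa_{i,z}(\underline{v})\,\text{d}\underline{v}$ into the claimed $\sum_z p(z)\int \mathsf{a}_{i,z}(\underline{v})\kappa_{i,z}(\underline{v})\,\text{d}\underline{v}$. The main obstacle I anticipate is not any single computation but the measure-theoretic bookkeeping around the sufficient statistic $\Phi_i$: rigorously establishing $I(Z_i;Y_i|Y_{\sim i})=I(Z_i;Y_i|\Phi_i)$ and the posterior identity $P(Z_i=z\mid\Phi_i=\underline{v})=v_{[z]}$ for the continuous-valued vector $\Phi_i$, and justifying the interchange of differentiation and integration underlying the score-cancellation. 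Everything downstream of these facts is routine GEXIT-kernel manipulation.
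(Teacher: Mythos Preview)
Your proposal is correct and follows essentially the same route as the paper: both reduce $\mathsf{G}_i$ to $\partial H(Z_i\mid Y_i,\Phi_i)/\partial\mathtt{h}_i$ via the bijection $X_1^n\leftrightarrow Z_1^n$, sufficiency of $\Phi_i$, and the conditional independence $Y_i\perp\Phi_i\mid Z_i$, then differentiate in $\epsilon$ with the score-cancellation and divide by $\partial\mathtt{h}_i/\partial\epsilon$. The only cosmetic difference is that the paper splits $H(Z_1^n\mid Y_1^n)$ by the entropy chain rule and expands the joint as $p(z_i)p(\phi_i\mid z_i)$ from the outset, whereas you pass through $I(Z_i;Y_i\mid\Phi_i)$ and apply the Bayes identity $p_{\Phi_i}(\underline{v})\,v_{[z]}=p(z)\,\mathsf{a}_{i,z}(\underline{v})$ at the end.
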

\begin{proof}
Suppose the initial state is $S_{0}$, we start by writing\begin{align}
H(X_{1}^{n}|Y_{1}^{n},S_{0}) & =H(Z_{1}^{n}|Y_{1}^{n},S_{0})\nonumber \\
 & =H(Z_{i}|Y_{1}^{n},S_{0})+H(Z_{\sim i}|Y_{1}^{n},Z_{i},S_{0}).\label{eq:1}\end{align}

For simplicity of notation, we drop $S_{0}$ in all the expressions
although the dependency on $S_{0}$ is always implied. From (\ref{eq:gih1hn})
and (\ref{eq:1}), it is clear that

\begin{align}
\mathsf{G}_{i}(\mathtt{h})= & \frac{\partial}{\partial\mathtt{h}_{i}}H(Z_{i}|Y_{1}^{n}).\label{eq:Gih}\end{align}
We also have\begin{align}
 & H(Z_{i}|Y_{1}^{n})=H(Z_{i}|Y_{i},\Phi_{i}(Y_{\sim i}))\nonumber \\
 & \phantom{H(Z_{i}|Y_{1}^{n})}=-\int_{\phi_{i}}\int_{y_{i}}\sum_{z_{i}}p(z_{i})p(\phi_{i}|z_{i})p(y_{i}|z_{i})\log_{2}\!\left\{ \!\frac{p(z_{i}|\phi_{i})p(y_{i}|z_{i})}{\sum_{z_{i}^{'}}p(z_{i}^{'}|\phi_{i})p(y_{i}|z_{i}^{'})}\!\right\} \text{d}y_{i}\text{d}\phi_{i}\label{eq:2}\end{align}
where (\ref{eq:2}) follows from the Bayes' theorem and the fact that
\begin{equation}
p(z_{i},\phi_{i},y_{i})=p(z_{i},\phi_{i})p(y_{i}|\phi_{i},z_{i})=p(z_{i})p(\phi|z_{i})p(y_{i}|z_{i}).\label{eq:3}\end{equation}
Note that (\ref{eq:3}) is true since $Y_{i}$ and $\Phi_{i}(Y_{\sim i})$
are independent given $Z_{i}$, i.e., $Y_{i}\rightarrow Z_{i}\rightarrow\Phi_{i}(Y_{\sim i}).$ 

Taking derivative and using $p(z_{i}|\phi_{i})=p(z_{i}|y_{\sim i})$,
we get%
\footnote{One can verify that the terms obtained by taking derivative with respect
to the channel inside the $\log_{2}$ vanish.%
}\begin{align*}
\mathsf{G}_{i}(\mathtt{h}) & =\sum_{z_{i}}p(z_{i})\int_{\phi_{i}}p(\phi_{i}|z_{i})\int_{y_{i}}\frac{\text{d}}{\text{d}\mathtt{h}_{i}}p(y_{i}|z_{i})\log_{2}\left\{ \sum_{z_{i}'}\frac{p(z'_{i}|y_{\sim i})p(y_{i}|z'_{i})}{p(z_{i}|y_{\sim i})p(y_{i}|z_{i})}\right\} \text{d}y_{i}\text{d}\phi_{i}\\
 & =\sum_{z}p(z)\int_{\underline{v}}\mathsf{a}_{i,z}(\underline{v})\kappa_{i,z}(\underline{v})\text{d}\underline{v}.\end{align*}
where\begin{align*}
\kappa_{i,z}(\underline{v}) & =\int_{y_{i}}\frac{\text{d}}{\text{d}\mathtt{h}_{i}}p(y_{i}|z)\log_{2}\left\{ \frac{\sum_{z'}v_{[z']}p(y_{i}|z')}{v_{[z]}p(y_{i}|z)}\right\} \text{d}y_{i}\\
 & =\int_{y_{i}}\frac{\partial}{\partial\epsilon}p(y_{i}|z)\log_{2}\left\{ \frac{\sum_{z'}v_{[z']}p(y_{i}|z')}{v_{[z]}p(y_{i}|z)}\right\} \text{d}y_{i}/\frac{\partial\mathtt{h}_{i}}{\mbox{\ensuremath{\partial}}\epsilon}.\end{align*}

Finally, by seeing that

\begin{align*}
\frac{\partial\mathtt{h}_{i}}{\partial\epsilon} & =\frac{\partial H(Z_{i}|Y_{i}(\epsilon))}{\partial\epsilon}\\
 & =\sum_{z}\int_{y_{i}}p(z)\frac{\partial}{\partial\epsilon}p(y_{i}|z)\log_{2}\left\{ \frac{\sum_{z'}p(z')p(y_{i}|z')}{p(z)p(y_{i}|z)}\right\} \text{d}y_{i}.\end{align*}
we obtain the result.\end{proof}
\begin{remrk}
For erasure noise and the GEC in particular, $\mathtt{h}=H(Z|Y)=\epsilon H(Z)$
(scaling $\epsilon$ by $H(Z))$ and since in this case \[
\kappa_{i,z}(\underline{v})=\frac{1}{H(Z)}\log_{2}\left\{ 1+\frac{\sum_{z'\neq z}v_{[z']}}{v_{[z]}}\right\} ,\]
$\mathsf{G}(\mathtt{h})=\frac{h(\epsilon)}{H(Z)}$ (scaling $h(\epsilon)$
by $\frac{1}{H(Z)}$) where $h(\epsilon)$ is the EXIT function for
the GEC.
\end{remrk}

\begin{remrk}
\label{rem:TwoExtremes}At $\sigma=0$ for AWGN case (or at $\epsilon=0$
for erasure noise), $\mathtt{h}=0$ and $\mathsf{a}_{i,z}$ is {}``delta
at $\underline{v}=e_{[z]}"$ where $e_{[z]}$ is the standard basis
vector. At this extreme, $\mathsf{G}(0)=0$ since $\kappa_{i,z}(\underline{v})=0$.
At the other extreme $\sigma\rightarrow\infty$ (or at $\epsilon=1$
for erasure noise), $\mathtt{h}=H(Z)$ (e.g., $1.5$ for the dicode
channel) and $\mathsf{G}(\mathtt{h})=1$ since in this case $\mathsf{a}_{i,z}$
is {}``delta at $v_{[z']}=p(z')\,\,\forall z'$''. 
\end{remrk}

\subsubsection{BP-GEXIT curve (with AWGN)}

In this section, we are particularly interested in computing the BP-GEXIT
function for ISI channels with AWGN. In this case, let $\Phi_{i}^{\text{BP},\ell}$
denote the extrinsic estimate of $Z_{i}$ at the $\ell$th round of
joint BP decoding. If $\Phi_{i}^{\text{BP},\ell}$ is used instead
of $\Phi_{i}$ in the above formulas then one has the BP-GEXIT (at
the $\ell$th round) $\mathsf{G}^{\text{BP},\ell}$ in a similar manner
to \cite{Measson-it09} and the overall BP-GEXIT $\mathsf{G}^{\text{\text{BP}}}(\mathtt{h})=\lim_{\ell\rightarrow\infty}\mathsf{G}^{\text{BP},\ell}(\mathtt{h})$.
Also, notice that the two extremes in Remark \ref{rem:TwoExtremes}
still apply when the BP decoder is used instead of the MAP decoder. 

Next, AWGN implies that $p(y_{i}|z)=\frac{1}{\sqrt{2\pi\sigma^{2}}}e^{-\frac{(y_{i}-z)^{2}}{2\sigma^{2}}}$
and then $\frac{\partial}{\partial\epsilon}p(y_{i}|z)=((y_{i}-z)^{2}-\sigma^{2})p(y_{i}|z)$.
Therefore, the corresponding $i$th BP-GEXIT is $\mathsf{G}_{i}^{\text{BP,\ensuremath{\ell}}}(\mathtt{h})=\frac{A}{B}$
where\begin{align*}
A & =\sum_{z}p(z)\int_{\underline{v}}\mathsf{a}_{i,z}^{\text{BP},\ell}(\underline{v})\int_{-\infty}^{\infty}p(y_{i}|z)\left\{ \frac{(y_{i}-z)^{2}}{\sigma^{2}}-1\right\} \log_{2}\left\{ \sum_{z'}\frac{v_{[z']}}{v_{[z]}}e^{\frac{(z'-z)(2y_{i}-z-z')}{2\sigma^{2}}}\right\} \text{d}y_{i}\text{d}\underline{v}\end{align*}
and\begin{align*}
B & =\sum_{z}p(z)\int_{-\infty}^{\infty}p(y_{i}|z)\left\{ \frac{(y_{i}-z)^{2}}{\sigma^{2}}-1\right\} \log_{2}\left\{ \sum_{z'}\frac{p(z')}{p(z)}e^{\frac{(z'-z)(2y_{i}-z-z')}{2\sigma^{2}}}\right\} \text{d}y_{i}.\end{align*}

In the limit of $\ell\rightarrow\infty$, one can run the DE for ISI
channels \cite{Kavcic-it03} to obtain the DE-FP and compute the quantities
$A$ and $B$ at this FP. With some abuse of notation, let $\mathsf{a}^{(\ell)},\mathsf{b}^{(\ell)},\mathsf{c}^{(\ell)}$
and $\mathsf{d}^{(\ell)}$ denote the average density of the bit-to-check,
check-to-bit, bit-to-trellis and trellis-to-bit messages, respectively
(see Fig. \ref{fig:JointGraph}), at iteration $\ell$ with initial
values (at $\ell=0$) being $\Delta_{0}$, the delta function at $0$.
Also, let $\mathsf{n}$ denote the density of channel noise. The DE
update equation for joint BP decoding of a general binary-input ISI
channels is\begin{align*}
\mathsf{a}^{(\ell)} & =\mathsf{d}^{(\ell-1)}\oast\lambda(\mathtt{b}^{(\ell-1)}),\\
\mathsf{b}^{(\ell)} & =\rho(\mathsf{a}^{(\ell)}),\\
\mathsf{c}^{(\ell)} & =L(\mathsf{b}^{(\ell)}),\\
\mathsf{d}^{(\ell)} & =\Gamma(\mathsf{c}^{(\ell)},\mathsf{n})\end{align*}
where for a density $\mathsf{x}$, $\lambda(\mathsf{x})=\sum_{i}\lambda_{i}\mathsf{x}^{\oast(i-1)},\rho(\mathsf{x})=\sum_{i}\rho_{i}\mathsf{x}^{\boxast(i-1)}$
and $L(\mathsf{x})=\sum_{i}L_{i}\mathsf{x}^{\oast i}$. The operators
$\oast$ and $\boxast$ are the standard density transformations used
in \cite[p. 181]{RU-2008}. The map $\Gamma(\cdot,\cdot)$ is not
easy to compute in closed form for general trellises and often one
needs to resort to the Monte Carlo methods (i.e., running the windowed
BCJR algorithm with window parameter $W$ on a long enough trellis
- see details in \cite{Kavcic-it03}) to give the estimates. A similar
method was used to upper bound the MAP threshold for turbo codes over
BMS channels \cite{Measson-isit05}.

The denominator $B$ can be computed either by numerical integration
or by Monte Carlo methods. Meanwhile, the numerator $A$ involves
in the quantity $v_{[z]}=p\left(Z_{i}=z|\mathtt{T}_{i}^{\ell}\right)$
where $\mathtt{T}_{i}^{\ell}$ denotes the computation tree of depth
$\ell$, rooted at index $i$, which includes all channel and code
constraints associated with $\ell$ iterations of decoding. This computation
tree $\mathtt{T}_{i}^{\ell}$ excludes the tree root $y_{i}$ and
is implied by the decoding schedule in the DE equation. The quantity
$v_{[z]}$, due to complications from the trellis, is not easy to
obtain in closed form. However, one can readily compute $v_{[z]}$
as an extra output of the BCJR algorithm (already used in DE) as\[
v_{[z]}\propto\sum_{s_{i},s_{i-1}:Z_{i}=z}\alpha_{i-1}(s_{i-1})\cdot\gamma_{i}(s_{i-1},s_{i})\cdot\beta_{i}(s_{i}).\]
where $\gamma_{i}(s_{i-1},s_{i})$ is probability of the input $x_{i}$
that corresponds to the transition from state $s_{i-1}$ (at time
index $i-1)$ to state $s_{i}$ at (time index $i$) given the computation
tree $\mathtt{T}_{i}^{\ell}$. Here, $\alpha_{i}(\cdot)$ and $\beta_{i}(\cdot)$
are the standard forward and backward state probabilities in the BCJR
algorithm. Note that the scaling constant can be chosen so that $\sum_{z}v_{[z]}=1$.

\subsection{Upper Bound for the MAP Threshold}

As briefly discussed before, the above-mentioned GEXIT curve naturally
follows the area theorem\[
\int_{\mathtt{h}^{\text{MAP}}}^{H(Z)}\mathsf{G}(\mathtt{h})\text{d}\mathtt{h}=\int_{0}^{H(Z)}\mathsf{G}(\mathtt{h})\text{d}\mathtt{h}=\mathtt{r}.\]

One can also apply \cite[Lm. 4]{Measson-it09} to the BMS channel
from $Z_{1}^{n}$ to $Y_{1}^{n}$ and obtains\[
\frac{\partial H(Z_{i}|Y_{1}^{n})}{\partial\mathtt{h}_{i}}\leq\frac{\partial H(Z_{i}|Y_{i},\Phi_{i}^{\text{BP},\ell})}{\partial\mathtt{h}_{i}}.\]
Consequently, by invoking (\ref{eq:Gih}), one has the optimality
of the MAP decoder in the sense that $\mathsf{G}(\mathtt{h})\leq\mathsf{G}^{\text{BP}}(\mathtt{h})$.
Therefore, one can use the discussed bounding technique, i.e., by
finding the largest value $\bar{\mathtt{h}}^{\text{MAP}}$ such that
the area under the BP-GEXIT curve equals the code rate,

\[
\int_{\bar{\mathtt{h}}^{\text{MAP}}}^{H(Z)}\mathsf{G}^{\text{BP}}(\mathtt{h})\text{d}\mathtt{h}=\mathtt{r},\]
to obtain the MAP upper bound $\bar{\mathtt{h}}^{\text{MAP}}\geq\mathtt{h}^{\text{MAP}}$
(as $\int_{\bar{\mathtt{h}}^{\text{MAP}}}^{H(Z)}\mathsf{G}^{\text{BP}}(\mathtt{h})\text{d}\mathtt{h}=\int_{\mathtt{h}^{\text{MAP}}}^{H(Z)}\mathsf{G}(\mathtt{h})\text{d}\mathtt{h}\leq\int_{\mathtt{h}^{\text{MAP}}}^{H(Z)}\mathsf{G}^{\text{BP}}(\mathtt{h})\text{d}\mathtt{h}$). 

For example, the BP-GEXIT curve for the $(3,6)$-regular LDPC code
over an AWGN dicode channel with $a(D)=(1-D)/\sqrt{2}$ following
the analysis in Section \ref{sub:GEXIT-Curves} is shown in Fig. \ref{fig:BP-GEXIT-curve}.
In this case, $\mathtt{h}^{\text{BP}}(3,6)\approx0.851\pm0.001$ (the
corresponding%
\footnote{We adopt the convention that $\sigma$ is the SNR threshold measured
in dB.%
} $\sigma^{\text{BP}}(3,6)\approx1.703\pm0.001$ dB) while $\bar{\mathtt{h}}^{\text{MAP}}(3,6)\approx0.920\pm0.001$
(or $\bar{\sigma}^{\text{MAP}}(3,6)\approx0.959\pm0.001$ dB). Similarly,
for the $(5,10)$-regular LDPC code, one has $\mathtt{h}^{\text{BP}}(5,10)\approx0.716\pm0.001$
and $\bar{\mathtt{h}}^{\text{MAP}}(5,10)\approx0.931\pm0.001$. The
corresponding thresholds measured in dB can be found in Table. \ref{tab:thresholds}.%
\begin{table}
\begin{centering}
\begin{tabular}{|c||c|c|c|c|c|c|}
\hline 
$(l,r)$- & \multicolumn{3}{c|}{DEC} & \multicolumn{3}{c|}{Dicode AWGN}\tabularnewline
\cline{2-7} 
regular & $\epsilon^{\text{BP}}$ & $\bar{\epsilon}^{\text{MAP}}$ & $\epsilon^{\text{SIR}}$ & $\sigma^{\text{BP}}$ & $\bar{\sigma}^{\text{MAP}}$ & $\sigma^{\text{SIR}}$\tabularnewline
\hline
\hline 
$(3,6)$ & \textsf{$0.5689$} & \textsf{$0.6387$} & $0.6404$ & \textsf{$1.073$} & $0.959$ & $0.823$\tabularnewline
\hline 
$(5,10)$ & \textsf{$0.4647$} & \textsf{$0.6404$} & \textbf{$0.6404$} & \textsf{$3.032$} & $0.834$ & $0.823$\tabularnewline
\hline
\end{tabular}
\par\end{centering}

\caption{\label{tab:thresholds}Threshold estimates of $(l,r)$-regular ensembles
over the DEC and dicode AWGN channel. For AWGN noise, the thresholds
are measured in dB.}
\vspace{-7mm}
\end{table}

\begin{figure}
\begin{centering}
\begin{minipage}[c][1\totalheight][t]{0.485\textwidth}%
\begin{center}
\includegraphics[scale=0.65]{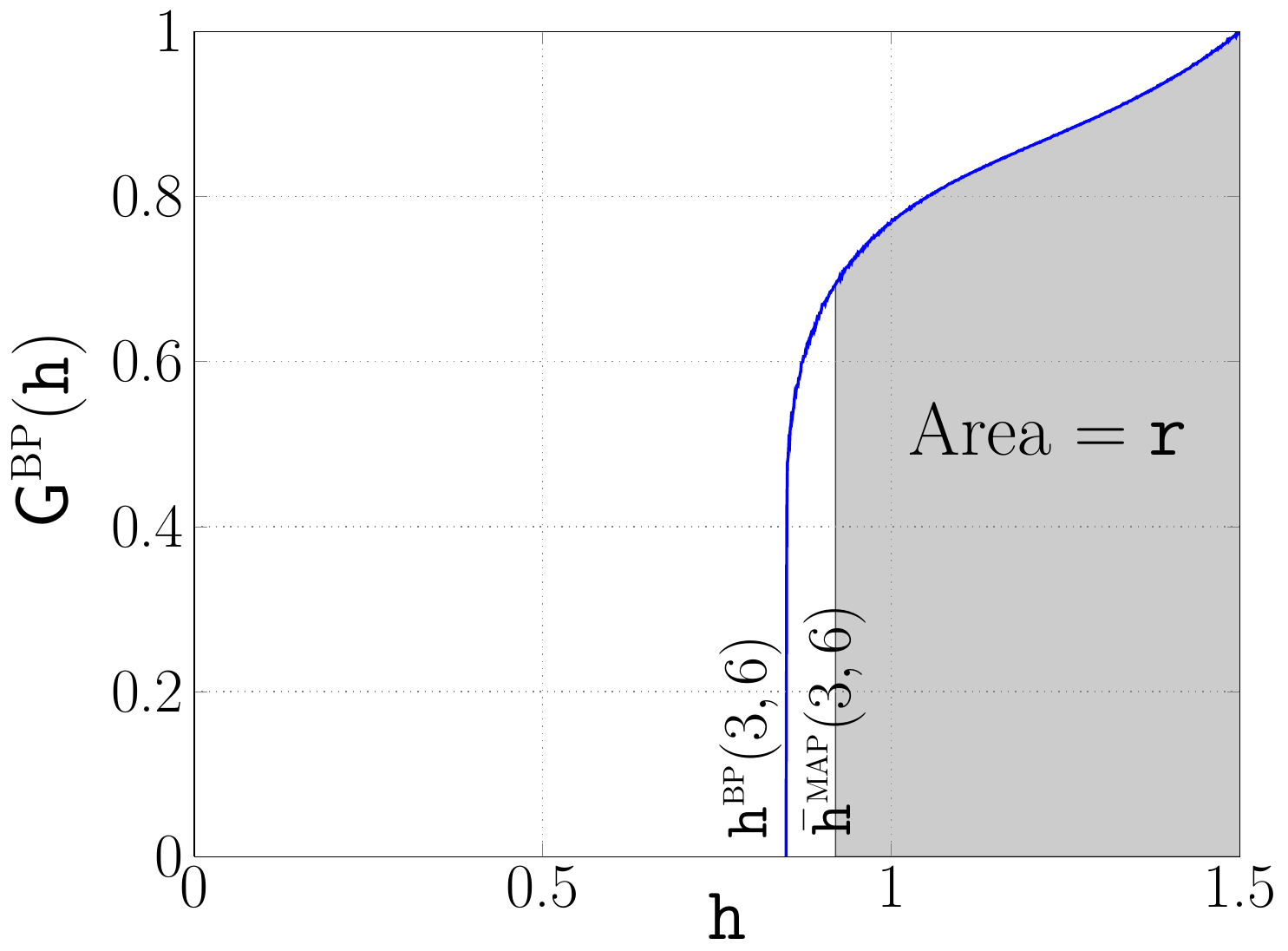}
\par\end{center}%
\end{minipage}\hspace{0.5cm}%
\begin{minipage}[c][1\totalheight][t]{0.485\textwidth}%
\begin{center}
\includegraphics[scale=0.65]{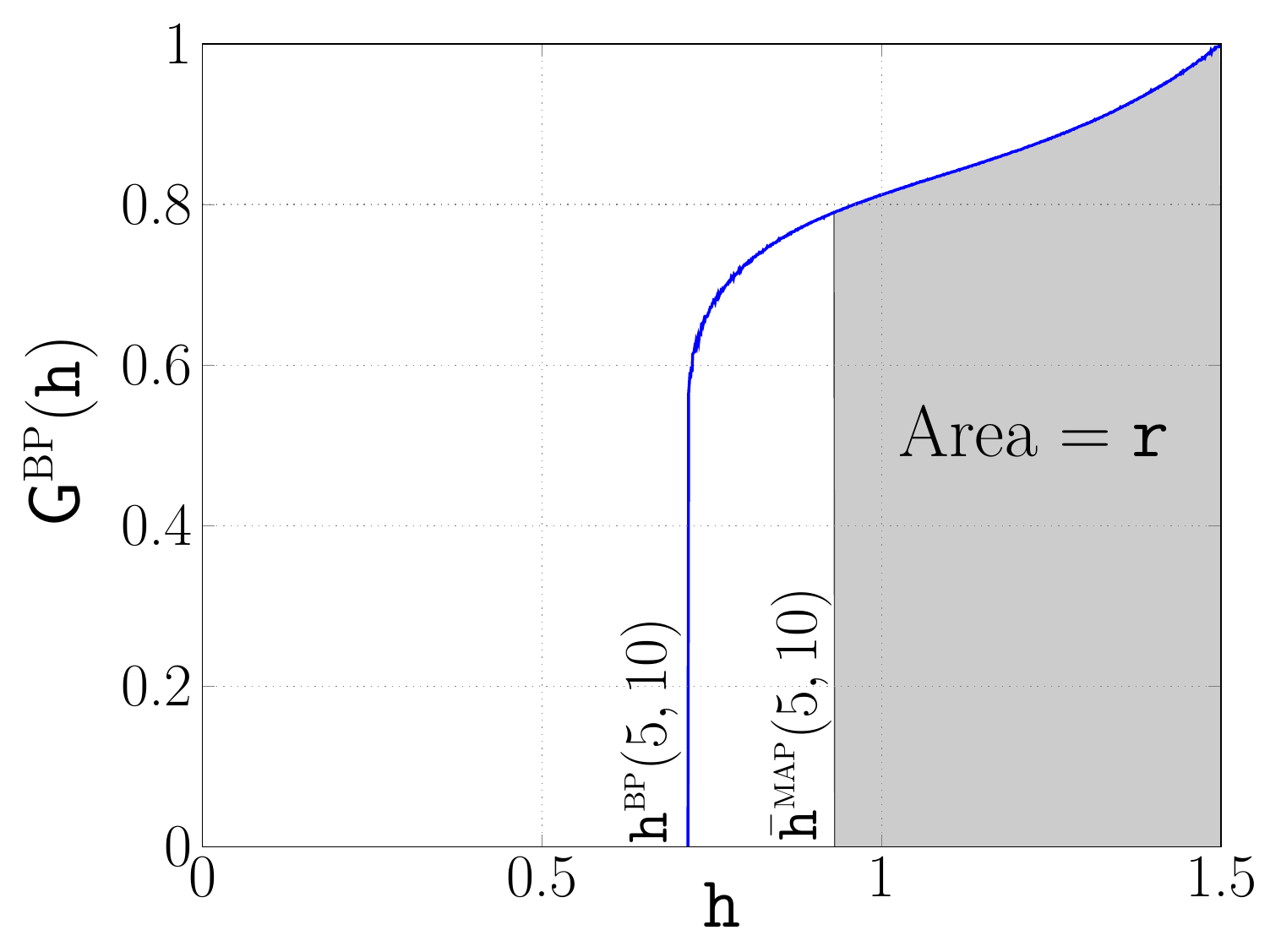}
\par\end{center}%
\end{minipage}
\par\end{centering}

\caption{\label{fig:BP-GEXIT-curve}The BP-GEXIT curve for $(3,6)$-regular
and $(5,10)$-regular LDPC codes over an AWGN dicode channel with
$a(D)=(1-D)/\sqrt{2}$. The upper bound $\bar{\mathtt{h}}^{\text{MAP}}$
is obtained by setting the area under the BP-GEXIT curve (the shaded
region) equal to the code rate.}

\end{figure}

\subsection{\label{sub:DESCISI}Spatially-Coupled Codes on the ISI Channels}

Consider the $(l,r,L)$ spatially-coupled ensemble. For the ISI channels,
the DE equation for this ensemble can be obtained from the protograph
chain in a similar manner to the case of memoryless channels discussed
in \cite{Lentmaier-it10}. For each $i,j\in[1-\hat{l},L+\hat{l}]$,
let $\mathsf{a}_{i\rightarrow j}^{(\ell)}$ (and $\mathsf{b}_{i\leftarrow j}^{(\ell)}$)
denote the average density of the messages from bit nodes at position
$i$ to check nodes at position $j$ (and the other way around)%
\footnote{For $i\notin[1,L]$, set $\mathsf{a}_{i\rightarrow j}^{(\ell)}=\Delta_{+\infty},$
the delta function at $+\infty$. %
}. With all the initial message densities (at $\ell=0$) being $\Delta_{0}$,
the DE update equation (for all $i\in[1,L]$) is

\begin{align*}
\mathsf{a}_{i\rightarrow j}^{(\ell)} & =\mathsf{d}_{i}^{(\ell-1)}\oast\left\{ \oastlimits_{j'\in[i-\hat{l},i+\hat{l}]\setminus j}\mathsf{b}_{i\leftarrow j'}^{(\ell-1)}\right\} ,\forall j\in[i-\hat{l},i+\hat{l}],\\
\mathsf{b}_{i\leftarrow j}^{(\ell)} & =\boxastlimits_{i'\in[j-\hat{l},j+\hat{l}]\setminus i}\mathsf{a}_{i'\rightarrow j}^{(\ell)},\forall j\in[i-\hat{l},i+\hat{l}],\\
\mathsf{c}_{i}^{(\ell)} & =\oastlimits_{j'\in[i-\hat{l},i+\hat{l}]}\mathsf{b}_{i\leftarrow j'}^{(\ell)},\\
\mathsf{d}_{i}^{(\ell)} & =\Gamma(\mathsf{c}_{i}^{(\ell)},\mathsf{n})\end{align*}
where $\oastlimits_{j\in\{j_{1},\ldots,j_{t}\}}\mathsf{x}_{j}$ and
$\boxastlimits_{i\in\{i_{1},\ldots,i_{t}\}}\mathsf{x}_{i}$ denote
the operations $\mathsf{x}_{j_{1}}\oast\mathsf{x}_{j_{2}}\oast\ldots\oast\mathsf{x}_{j_{t}}$
and $\mathsf{x}_{i_{1}}\boxast\mathsf{x}_{i_{2}}\boxast\ldots\boxast\mathsf{x}_{i_{t}}$,
respectively.

\subsection{Simulation Results}

In this section, we start with the $(l,r,L)$ circular ensemble obtained
by considering all the positions $i>L$ of the protograph chain to
be the same as position $i-L$ (similar to \cite{Kudekar-istc10}).
The order of bit transmissions is {}``left to right'' in each length-$L$
row and then start with the next row (in a total of $M$ rows, see
Fig. \ref{fig:JGraph(3,6,L)}). The $I\triangleq\max(\nu,l-1)$ first
bits in each row are known. This known bits will {}``break'' the
circular ensemble into the $(l,r,L-I)$ ensemble and also serve as
the pilot bits to fix the trellis state. As a consequence of this
fixing, one only needs to run the BCJR independently in each row and
this can be done in a parallel manner \cite{Narayanan-aller04,Soriaga-it07}. 

In our experiments, we conduct simulations over the AWGN dicode channel
with $a(D)=(1-D)/\sqrt{2}$ and memory $\nu=1$. First, we use the
DE in Sec. \ref{sub:DESCISI} to compute the BP thresholds of the
spatially-coupled coding scheme. The results in Fig. \ref{fig:BER-and-BP}
reveals that $\sigma^{\text{BP}}(3,6,22)$ is roughly $0.959\pm0.001$
dB and approximately the same as $\sigma^{\text{BP}}(3,6,44)$ whose
rate loss is smaller. Notice that this is also roughly $\bar{\sigma}^{\text{MAP}}(3,6)$
- the MAP threshold estimate of the underlying $(3,6)$-regular ensemble,
obtained by the bounding technique, and is a significant improvement
over $\sigma^{\text{BP}}(3,6)\approx1.703\pm0.001$ dB. This suggests
that threshold saturation occurs for regular ensembles. Since MAP
decoding of regular ensembles can achieve the SIR \cite{Bae-jsac09},
it also implies that one can universally approach the SIR of general
ISI channels using coupled codes with joint iterative decoding. To
support this, one can also see that for the $(5,10,44)$ ensemble
of the same rate as the $(3,6,22)$ one, the threshold $\sigma^{\text{BP}}(5,10,44)\approx0.834\pm0.001$
dB (which is also roughly $\bar{\sigma}^{\text{MAP}}(5,10))$ gets
very close to the signal-to-noise ratio (SNR) corresponding to the
SIR ($\sigma^{\text{SIR}}\approx0.823\pm0.001$ dB using the numerical
method in \cite{Arnold-icc01,Pfister-globe01}). 

Also shown in Fig. \ref{fig:BER-and-BP} is the bit error rate (BER)
versus SNR plot for the ensembles derived from the $(l,r,L)$ circular
ensembles of finite $M=502$ and $M=5000$. For each simulation, we
use $\mathtt{l}_{\text{outer}}=20$ channel updates and between two
such channel updates, we run $\mathtt{l}_{\text{inner}}=5$ BP iterations
on the code part alone. The curves labeled {}``target'' is the BER
for the bits at position $I+1$ (right after the known bits) in the
coupled chain while the curve labeled {}``overall'' is the overall
BER for all the positions $[I+1,L]$ together. One might expect that
the {}``overall'' BER will get closer to the {}``target'' BER
for large enough $M$ and large enough number of iterations. From
Fig. \ref{fig:BER-and-BP}, one can also observe that the {}``overall''
BER for $(3,6,22)$ and $M=5000$ keeps getting {}``closer'' to
the {}``target'' BER as SNR slightly increases. Those BER curves
are way to the left of $\epsilon^{\text{BP}}(3,6)$ - the BP threshold
for the underlying $(3,6)$-regular ensemble. %
\begin{figure}
\begin{centering}
\includegraphics[scale=0.85]{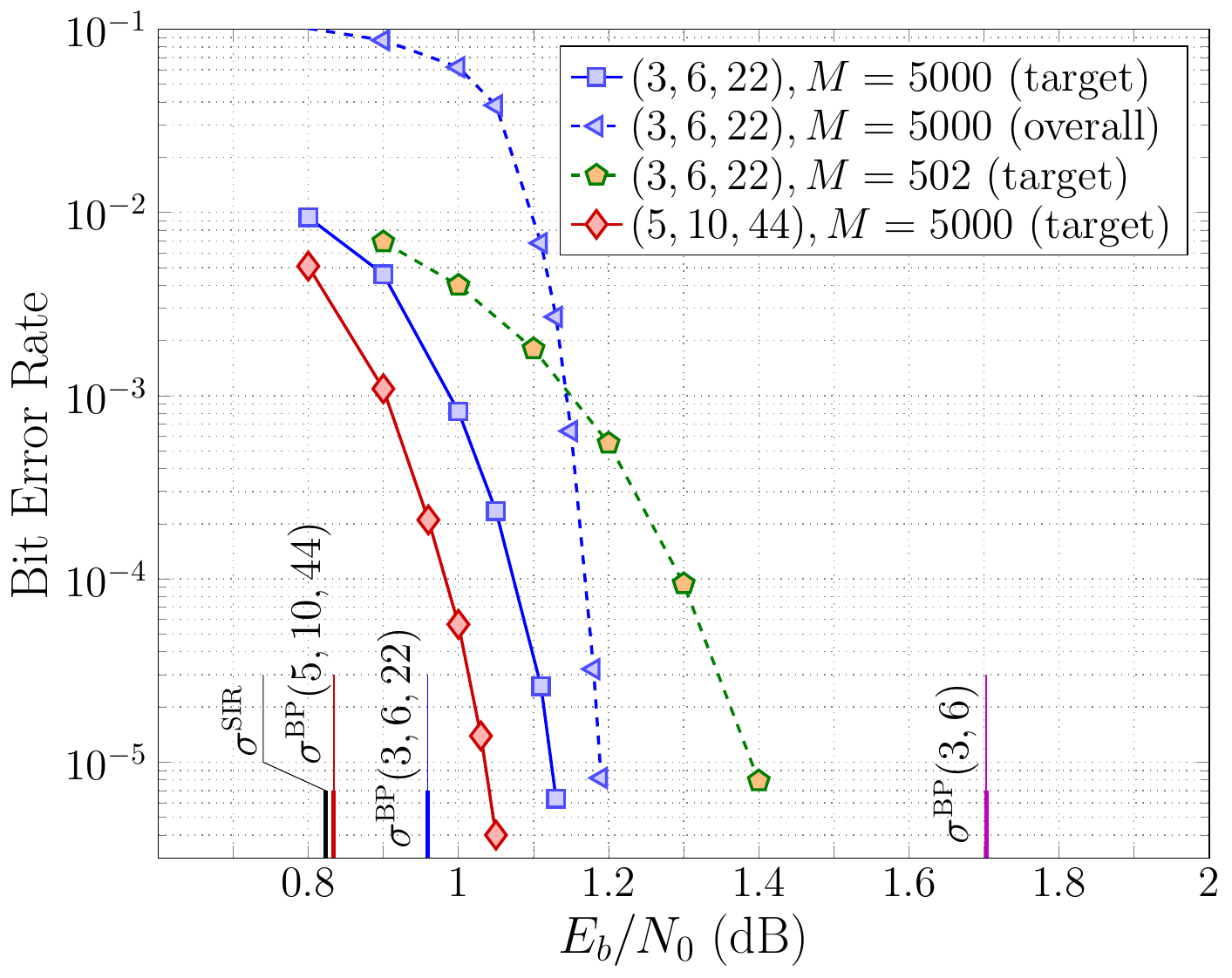}
\par\end{centering}

\caption{\label{fig:BER-and-BP}BER and BP thresholds for the $(3,6)$-regular
LDPC code, $(3,6,22)$ and $(5,10,44)$ spatially-coupled codes over
the AWGN dicode channel.}

\end{figure}

\section{Concluding Remarks}

In this paper, we consider binary communication over the ISI channels
and numerically show that the threshold saturation effect occurs on
both the DEC and dicode channel with AWGN. To do this, we construct
the EXIT and GEXIT curves that satisfy the area theorem and obtain
an upper bound on the threshold of the MAP decoder. This upper bound
is conjectured to be tight and, for the DEC, we show a numerical evidence
which strongly supports this conjecture. The observed threshold saturation
effect is valuable because by changing the underlying regular LDPC
ensemble, i.e., increasing the degrees according to a fixed code rate,
combined with the results of \cite{Bae-jsac09}, it is shown that
the joint BP decoding of spatially-coupled codes can universally approach
the SIR of the ISI channels.

Also, it has been known that the spatially-coupled codes (or LDPC
convolutional codes) inherit some other advantages such as the typical
minimum distance and the size of the smallest non-empty trapping sets
both growing linearly with the protograph expansion $M$ \cite{Mitchell-isit11}.
In addition, the convolutional structure of the codes allows one to
consider a windowed decoder like the one discussed in \cite{Iyengar-itsub10,Iyengar-isit11}.
All of these properties suggest that spatially-coupled codes may be
competitive in practice for systems with ISI.

\bibliographystyle{IEEEtran}

\end{document}